\documentclass{style/vldb}

% Specific for acmart (CIKM) package
%\settopmatter{printacmref=false} % Removes citation information below abstract
%\renewcommand\footnotetextcopyrightpermission[1]{} % removes footnote with conference information in first column
%\pagestyle{plain} % removes running headers
%\usepackage{booktabs} % For formal tables, 

\usepackage{algorithm}
\usepackage{algorithmic}
\usepackage{color}
\usepackage{graphicx}
\usepackage{array}
\usepackage{caption}
\usepackage{subcaption}
\usepackage{multirow}
\usepackage{hyperref}

\usepackage{amsthm}

\newcommand{\hindent}[1][1]{\hspace{#1\algorithmicindent}}

\newtheorem{theorem}{Theorem}
\newtheorem{lemma}{Lemma}
\newtheorem{exmp}{Example}
 
\theoremstyle{definition}
\newtheorem{definition}{Definition}

%%%%%%%%%%%%%%%%%%%%%%%%%%%% LaTeX Tweaks for space
\usepackage[font={small,bf}]{caption} %changes caption size style
\setlength{\textfloatsep}{2pt} %distance between floats on the top or the bottom and the text;
\setlength{\floatsep}{2pt} %distance between two floats
\setlength{\intextsep}{2pt} %distance between floats inserted inside the page text (using h) and the text proper.
\dbltextfloatsep 8pt plus 2pt minus 4pt %distance between a float spanning both columns and the text;
\dblfloatsep 11pt plus 2pt minus 2pt %distance between two floats spanning both columns.

\setlength{\abovecaptionskip}{2pt}
\setlength{\belowcaptionskip}{1pt}
%%%%%%%%%%%%%%%%%%%%%%%%%%%%

%\acmConference[]{}{}{}
  
\begin{document}

\title{Efficient Computation of Subspace Skyline over Categorical Domains}

\numberofauthors{1}
\author{
  \alignauthor
  Md Farhadur Rahman$^\dag$, Abolfazl Asudeh$^\dag$, Nick Koudas$^\ddag$, Gautam Das$^\dag$\\
  \affaddr { University of Texas at Arlington$^\dag$; University of Toronto$^\ddag$}
  {
    \email{
        $^{\dag}$\{mdfarhadur.rahman@mavs,~ab.asudeh@mavs,~gdas@cse\}.uta.edu, $^{\ddag}$koudas@cs.toronto.edu
    }
  }
}

\date{}

\maketitle

\begin{abstract}
Platforms such as AirBnB, Zillow, Yelp, and related sites have transformed the way we search for accommodation, restaurants, etc.
The underlying datasets in such applications have numerous attributes that are mostly Boolean or Categorical. 
Discovering the skyline of such datasets over a subset of attributes would identify entries that stand out while enabling numerous applications.
There are only a few algorithms designed to compute the skyline over categorical attributes, yet are applicable only when the number of attributes is small.

In this paper, we place the problem of skyline discovery over categorical attributes into perspective and design efficient algorithms for two cases.
(i) In the absence of indices, we propose two algorithms, ST-S and ST-P, that exploit the categorical characteristics of the datasets, organizing 
tuples in a tree data structure, supporting efficient dominance tests over the candidate set.
(ii) We then consider the existence of widely used precomputed sorted lists.
After discussing several approaches, and studying their limitations, we propose TA-SKY, a novel threshold style algorithm that utilizes sorted lists. 
Moreover, we further optimize TA-SKY and explore its progressive nature, making it suitable for applications with strict interactive requirements.
In addition to the extensive theoretical analysis of the proposed algorithms, we conduct a comprehensive experimental evaluation of the combination of real (including the entire AirBnB data collection) and synthetic datasets to study the practicality of the proposed algorithms. The results showcase the superior performance of our techniques, outperforming applicable approaches by orders of magnitude.
\end{abstract}
   % for VLDB

%\input{intro}
\section{Introduction}\label{sec:inro}
\subsection{Motivation}
\label{sec:motivation}
Skyline queries are widely used in applications involving multi-criteria decision making~\cite{hwang2012multiple}, and are further related to well-known problems such as top-$k$ queries~\cite{ilyas2008survey,asudeh2016discovering}, preference collection~\cite{asudeh2015crowdsourcing}, and nearest neighbor search~\cite{kossmann2002}.
Given a set of tuples, skylines are computed by considering the dominance relationships among them. A tuple $p$ \textit{dominates} another tuple $q$, if $q$ is not better than $p$ in any dimension and $p$ is better than $q$ in at least one dimension. Moreover, a pair of tuples $p$ and $q$ are considered to be \textit{incomparable} if neither $p$ nor $q$ dominates the other. The {\em Skyline} is the set of tuples that are not dominated by any other tuple in the dataset~\cite{borzsony2001skyline}.

In recent years, several applications have gained popularity in assisting users in tasks ranging from selecting a hotel in an area to locating a nearby restaurant. AirBnB, TripAdvisor, hotels.com, Craigslist, and Zillow are a few such examples. The underlying datasets have numerous attributes that are mostly Boolean or categorical. 
They also include a few numeric attributes, but in most cases the numeric attributes are discretized and transformed into categorical attributes~\cite{morse2007efficient}.
For example, in the popular accommodation rental service AirBnB, the typical attributes are type and number of rooms, types of amenities offered, the number of occupants, etc. Table~\ref{tab:hostDataset} shows a toy example that contains a subset of attributes present in AirBnB. Note that most of the attributes are amenities provided by the hosts (the temporary rental providers) and are primarily Boolean. The AirBnB dataset features more than 40 such attributes describing amenities users can choose. One way of identifying desirable hosts in such a dataset is to focus on the non-dominated hosts. This is because if a listing $t$ dominates another listing $t'$ (i.e., $t$ is at least as good as $t'$ on all the attributes while is better on at least one attribute), $t$ should naturally be preferred over $t'$.

In the example shown in Table~\ref{tab:hostDataset}, "Host 1" and "Host 2" are in the skyline, while all the others are dominated by at least one of them.
In real-world applications, especially when the number of attributes increases, users naturally tend to focus on a subset of attributes that is of interest to them. For example, during an AirBnB query, 
we typically consider a few attributes while searching for hosts that are in the skyline. For instance, in the dataset shown in Table~\ref{tab:hostDataset}, one user might be interested in \textit{Breakfast} and \textit{Internet}, while another user might focus on \textit{Internet}, \textit{Cable TV}, and \textit{Pool} when searching for a host.

\begin{table}[!t]
\centering
\caption{A sample hosts dataset}\label{tab:hostDataset}
\begin{tabular}{|p{1cm}|p{1.3cm}|p{0.70cm}|p{0.7cm}|p{1cm}|p{1cm}|}
    \hline 
    Host Name & Breakfast & Pool & Cable TV & Internet & Ratings\\
    \hline 
    Host 1 & T & F & T & T & 4.0\\
    Host 2 & T & T & F & T & 4.5\\
    Host 3 & T & F & F & T & 3.5\\
    Host 4 & T & F & F & F & 3.0\\
    Host 5 & F & F & T & T & 3.5\\
    \hline
\end{tabular}
\end{table}

In this paper, we consider the problem of {\em subspace skyline discovery} over such datasets, in which given an ad-hoc subset of attributes as a query, the goal is to identify 
the tuples in the skyline involving only those attributes\footnote{Naturally this definition includes skyline discovery over all attributes of a relation.}. Such subspace skyline queries are an effective tool in assisting users in data exploration (e.g., an AirBnB customer can explore the returned skyline to narrow down to a preferred host). 

In accordance with common practice in traditional database query processing, we design solutions for two important practical instances of this problem, namely: (a) assuming that no indices exist on the underlying dataset, and (b) assuming that indices exist on each individual attribute of the dataset. The space devoted to indices is a practical concern; given that the number of possible subset queries is exponential we do not consider techniques that would construct indices for each possible subset as that would impose an exponential storage overhead (not to mention increased overhead for maintaining such indices under dynamic updates as it is typical in our scenario). Thus we explore a solution space in which index overhead ranges from zero to linear in the number of attributes, trading space for increased performance as numerous techniques in database query processing typically do \cite{gupta1995aggregate, das2006answering, halevy2001answering}.

To the best of our knowledge, LS~\cite{morse2007efficient} and Hexagon~\cite{preisinger2007hexagon} are the only two algorithms designed to compute skylines over categorical attributes. Both of these algorithms operate by creating {\em a lattice} over the attributes in a skyline query, which is feasible only when the number of attributes is really small.

\vspace{-2mm}
\subsection{Technical Highlights}
In this paper, we propose efficient algorithms to effectively identify the answer for any subspace skyline query. Our main focus is to overcome the limitations of previous works (\cite{morse2007efficient, preisinger2007hexagon}), introducing efficient and scalable skyline algorithms for categorical datasets.

For the case when no indices are available, we design a tree structure to arrange the tuples in a ``candidate skyline'' set. 
The tree structure supports efficient dominance tests over the candidate set, thus reducing the overall cost of skyline computation. 
We then propose two novel algorithms called {\bf ST-S} (Skyline using Tree Sorting-based)  and {\bf ST-P} (Skyline using Tree Partition-based) 
that incorporate the tree structure into existing sorting- and partition-based algorithms. Both ST-S and ST-P work when no index is available on the underlying datasets and deliver superior performance for any subset skyline query.

Then, we utilize precomputed sorted lists~\cite{fagin2003optimal} and design efficient algorithms for the index-based version of our problem. 
As one of the main results of our paper, we propose the Threshold Algorithm for Skyline ({\bf TA-SKY}) capable of answering subspace skyline queries. In the context of {\bf TA-SKY}, we first start with a brief discussion of a few approaches that operate by constructing a full/partial lattice over the query space. However, these algorithms have a complexity that is exponential in the number of attributes involved in the skyline query. To overcome this limitation, we propose {\bf TA-SKY}, an interesting adaptation of the top-$K$ threshold (TA)~\cite{fagin2003optimal} style of processing for the subspace skyline problem. TA-SKY utilizes sorted lists and constructs the projection of the tuples in query space. 
%This adaptation is novel because TA-style algorithms are traditionally utilized to solve top-$k$ problems rather than skyline problems. 

TA-SKY proceeds by accumulating information, utilizing sequential access over the indices that enable it to stop early while guaranteeing that all skyline tuples have been identified. The early stopping condition enables TA-SKY to answer skyline queries {\em without accessing all the tuples}, thus reducing the total number of dominance checks, resulting in greater efficiency.
Consequently, as further discussed in \S\ref{sec:experiments}, TA-SKY demonstrates an order of magnitude speedup during our experiments.
In addition to TA-SKY, we subsequently propose novel optimizations to make the algorithm even more efficient. TA-SKY is an online algorithm - it can output a subset of skyline tuples without discovering the entire skyline set. The progressive characteristic of TA-SKY makes it suitable for web applications, with strict interactive requirements, where users want to get a subset of results very quickly.
We study this property of TA-SKY in \S\ref{sec:experiments} on the {\em entire AirBnB} data collection for which TA-SKY discovered more than two-thirds of the skyline in less than $3$ seconds while accessing around $2\%$ of the tuples, demonstrating the practical utility of our proposal.

\subsection{Summary of Contributions}
We propose a comprehensive set of algorithms for the subspace skyline discovery problem over categorical domains.
The summary of main contributions of this paper are as follows:
\begin{itemize}  
    \itemsep0em 
    \item We present a novel tree data structure that supports efficient dominance tests over relations with categorical attributes.
    \item We propose the ST-S and ST-P algorithms that utilize the tree data structure for the subspace skyline discovery problem, in the absence of indices.
    \item We propose TA-SKY, an efficient algorithm for answering subspace skyline queries with a linear worst case cost dependency to the number of attributes. The progressive characteristic of TA-SKY makes it suitable for interactive web-applications. This is a novel and the first (to our knowledge) adaptation of the TA style of processing to a skyline problem.
    \item We present a comprehensive theoretical analysis of the algorithms quantifying their performance analytically, and present the expected cost of each algorithm.
    \item We present the results of extensive experimental evaluations of the proposed algorithms over real-world and synthetic datasets at scale showing the benefits of our proposals. In particular, in all cases considered we demonstrate that the performance benefits of our approach are extremely large (in most cases by orders of magnitude) when compared to other applicable approaches.
\end{itemize}

\subsection{Paper Organization}
The rest of the paper is organized as follows. We discuss preliminaries, notations, and problem definition in \S\ref{sec:preliminaries}. Then, in \S\ref{sec:3}, we present the algorithm for identifying the subspace skyline over low-cardinality datasets, in the absence of precomputed indices. The algorithms for the case of considering the precomputed sorted lists are discussed in \S\ref{sec:subsky}. Following related work in \S\ref{sec:relWork}, we present the experimental results in \S\ref{sec:experiments}. \S\ref{sec:conclusion} concludes the paper.

\section{Preliminaries}\label{sec:preliminaries}
Consider a relation $D$ with $n$ tuples and $m+1$ attributes. 
One of the attributes is $tupleID$, which has a unique value for each tuple.
Let the remaining $m$ categorical attributes be $\mathcal{A}=\{A_1,\dots ,A_m\}$. 
Let $Dom(\cdot)$ be a function that returns the domain of one or more attributes. For example, $Dom(A_i)$ represents the
domain of $A_i$, while $Dom(\mathcal{A})$ represents the Cartesian product
of the domains of attributes in $\mathcal{A}$. $|Dom(A_i)|$ represents the cardinality of $Dom(A_i)$. We use $t[A_i]$ to denote the value of $t$ on the attribute $A_i$.  
We also assume that for each attribute, the values in the domain have a total ordering by preference
(we shall  use overloaded notation such as $a > b$ to indicate that value $a$ is preferred over value $b$).
%\textcolor{red}{Gautam: Please make sure the rest of the paper, including pseudocode, uses notation such as $t[tupleID]$ and not $t.tupleID$}

\subsection{Skyline}
We now define the notions of \textit{dominance} %, \textit{incomparability} (\textcolor{red}{\em is this a proper term? maybe non-comparable?})
and \textit{skyline}~\cite{borzsony2001skyline} formally. 

\begin{definition}{(Dominance).}
A tuple $t\in D$ dominates a tuple $t^\prime\in D$, denoted by $t \succ t^\prime$, {\it iff} $\forall A\in\mathcal{A},\, t[A] \geq t^\prime[A]$ and $\exists A \in \mathcal{A}, \, t[A] > t^\prime[A]$.
Moreover, a tuple $t\in D$ is not comparable with a tuple $t^\prime\in D$, denoted by $t \sim t^\prime$, {\it iff} $t \nsucc t^\prime$ and $t^\prime \nsucc t$.
\end{definition}

\begin{definition}{(Skyline).}
Skyline, $\mathcal{S}$, is the set of tuples  that are not dominated by any other tuples in $D$, i.e.: $\mathcal{S} = \{t\in D|\nexists t^\prime \in D \mbox{ s.t. } t^\prime \succ t\}$
\end{definition}

For each tuple $t \in D$, we shall also be interested in computing its $score$ value, denoted by $score(t)$, using a monotonic function $F(\cdot)$. A function $F(\cdot)$ satisfies the monotonicity condition if $F(t) \geq F(t') \Rightarrow t' \nsucc t$.

\vspace{1mm}
\noindent{\bf Subspace Skyline:} Let $\mathcal{Q} \subseteq \mathcal{A}$ be a subset of attributes. The attributes in $\mathcal{Q}$ forms a $|\mathcal{Q}|$-dimensional subspace of $\mathcal{A}$. The projection of a tuple $t \in D$ in subspace $\mathcal{Q}$ is denoted by $t_{\mathcal{Q}}$ where $t_{\mathcal{Q}}[A] = t[A], \, \forall A \in \mathcal{Q}$. Let $D_{\mathcal{Q}}$ be the projection of all tuples of $D$ in subspace $\mathcal{Q}$ . A tuple $t_{\mathcal{Q}} \in D_{\mathcal{Q}}$ dominates another tuple $t^\prime_{\mathcal{Q}} \in D_{\mathcal{Q}}$ in subspace $\mathcal{Q}$ (denoted by $t_{\mathcal{Q}} \succ_{\mathcal{Q}} t^\prime_{\mathcal{Q}}$) if $t^\prime_\mathcal{Q}$ is not preferred to $t$ on any attribute in $\mathcal{Q}$ while $t$ is preferred to $t^\prime$ on least one attribute in $\mathcal{Q}$.

\begin{definition}{(Subspace Skyline).}
Given a subspace $\mathcal{Q}$, the Subspace Skyline, $\mathcal{S_\mathcal{Q}}$, is the set of tuples in $D_{\mathcal{Q}}$ that are not dominated by any other tuples, i.e.: $\mathcal{S_\mathcal{Q}} = \{t_{\mathcal{Q}} \in D_{\mathcal{Q}} | \nexists t^\prime_{\mathcal{Q}} \in D_{\mathcal{Q}} \mbox{ s.t. } t^\prime_{\mathcal{Q}} \succ_{\mathcal{Q}} t_{\mathcal{Q}}\}$
\end{definition}

\subsection{Sorted Lists}
{\em Sorted lists} are popular data structures widely used by many access-based techniques in data management~\cite{fagin1996combining,fagin2003optimal}.
Let $\mathcal{L} = \{ L_1, L_2, \ldots, L_m \}$ be $m$ sorted lists, where $L_i$ corresponds to a (descending) sorted list for attribute $A_i$. All these lists have the same length, $n$ (i.e., one entry for each tuple in the relation). Each entry of $L_i$ is a pair of the form $(tupleID, t[A_i])$. % where $tupleId$ is a unique id corresponding to tuple $t \in D$. Entries in list $L_i$ are sorted in descending order according to $t[A_i]$ (ties are broken arbitrarily). 

%\textcolor{red}{Gautam: the above definition of an inverted index is a problem. An inverted index is associated with a unique attribute value. But in your case, you are associating a list with an entire attribute.}

A sorted list supports two modes of access: (i) \textit{sorted (or sequential) access}, and (ii) \textit{random access}. Each call to \textit{sorted access} returns an entry with the next highest attribute value. Performing \textit{sorted access} $k$ times on list $L_i$ will return the first $k$ entries in the list. In \textit{random access} mode, we can retrieve the attribute value of a specific tuple. A \textit{random access} on list $L_i$ assumes $tupleID$ of a tuple $t$ as input and returns the corresponding attribute value $t[A_i]$.

\subsection{Problem Definition}
In this paper, we address the efficient computation of {\em subspace skyline} queries over a relation with categorical attributes.
Formally:

\medskip\noindent
 \framebox[\columnwidth]{\parbox{0.9\columnwidth}{ \textsc{Subspace Skyline Discovery:}
Given
a relation $D$ with the set of categorical attributes $\mathcal{A}$ 
and a subset of attributes in the form of a subspace skyline query $\mathcal{Q}\subseteq \mathcal{A}$,
find
the skyline over $\mathcal{Q}$, denoted by $\mathcal{S}_{\mathcal{Q}}$.
}}\\

In answering subspace skyline queries we consider two scenarios: (i) no precomputed indices are available, and (ii) existence of precomputed sorted lists.

%\subsection{Table of Notations}
Table~\ref{tab:notations} lists all the notations that are used throughout the paper (we shall introduce some of these later in the paper).
\begin{table}[!t]
\begin{tiny}
\centering
\caption{Table of notations}\label{tab:notations}
\begin{tabular}{|l|p{6cm}|}
    \hline 
    {\bf Notation} & {\bf Semantics}\\
    \hline
    $D$ & Relation\\
    \hline
    $n$ & Number of tuples in the relation\\
    \hline
    $m$ & Number of attributes\\
    \hline
    $t_1, \ldots, t_n$ & Set of tuples in $D$\\
    \hline
    $\mathcal{A}$ & Set of attributes in $D$\\
    \hline
    $Dom(\cdot)$ & Domain of a set of attributes\\
    \hline
    $score(t)$ & Score of the tuple $t$ computed using a monotonic function $F(\cdot)$\\
    \hline
    $t \succ t^\prime$ & $t$ dominates $t^\prime$\\
    \hline
    $\mathcal{L}$ & Set of $m$ sorted lists\\
    \hline
    $\mathcal{Q}$ & Subspace skyline query\\
    \hline
    $m^\prime$ & Number of attributes in $\mathcal{Q}$\\
    \hline
    $D_{\mathcal{Q}}$ & Projection of $D$ in query space $\mathcal{Q}$\\
    \hline
    $\mathcal{S}_\mathcal{Q}$ & Set of skyline tuples in $D_{\mathcal{Q}}$\\
    \hline
    $t_{\mathcal{Q}}$ & Projection of tuple $t$ in $\mathcal{Q}$\\
    \hline
    $t_{\mathcal{Q}} \succ_{\mathcal{Q}} t^\prime_{\mathcal{Q}}$ & $t_{\mathcal{Q}}$ dominates $t^\prime_{\mathcal{Q}}$ on query space $\mathcal{Q}$ \\
    \hline
    $\mathcal{L_Q}$ & Set of sorted lists corresponds to attributes in $\mathcal{Q}$\\
    \hline
    $cv_{ij}$ & Attribute value returned by $i$-th sorted access on list $L_j$\\
    \hline
    $T$ & Tree for storing the candidate skyline tuples\\
    \hline
    $p_i$ & the probability that the binary attribute $A_i$ is $1$\\
    \hline
\end{tabular}
\end{tiny}
\end{table}

\section{Skyline Computation Over Categorical Attributes}\label{sec:3}
Without loss of generality, for ease of explanation, we consider a relation with Boolean attributes, i.e., categorical attributes with domain size 2. We shall discuss the extensions of the algorithms for categorical attributes with larger domains later in this section.

Throughout this section, we consider the case in which precomputed indices are not available. First, we exploit the categorical characteristics of attributes by designing a tree data structure that can perform efficient {\em dominance} operations. Specifically, given a new tuple $t$, the tree supports three primitive operations -- i) INSERT($t$): inserts a new tuple $t$ to the tree, ii) IS-DOMINATED($t$): checks if tuple $t$ is dominated by any tuple in the tree, and iii) PRUNE-DOMINATED-TUPLES($t$): deletes the tuples dominated by $t$ from the tree. In Appendix \ref{ap:tree-optimizations}, we further improve the performance of these basic operations by proposing several optimization techniques. Finally, we propose two algorithms ST-S (Skyline using Tree Sorting-based) and ST-P (Skyline using Tree Partition-based) that incorporate the tree structure to state-of-art sorting- and partition-based algorithms.

\subsection{Organizing Tuples Tree}\label{subsec:tree}
\vspace{1mm}
\noindent{\bf Tree structure:} We use a binary tree to store tuples in the candidate skyline set. Consider an ordering of all attributes in $\mathcal{Q} \subseteq \mathcal{A}$, e.g., $[A_1, A_2, \ldots, A_{m'}]$.
In addition to tuple attributes, we enhance each tuple with a score, assessed using a function $F(\cdot)$. This score assists in improving performance during identification of the dominated tuples or while conducting the dominance check. The proposed algorithm is agnostic to the choice of $F(\cdot)$; the only requirement is that the function does not assign a higher score to a dominated tuple compared to its dominator.
The structure of the tree for Example~\ref{exmp:ST} is depicted in Figure~\ref{fig:tree}. The tree has a total of 5 ($=m'+ 1$) levels, where the $i$'th level ($1 \leq i \leq m'$) represents attribute $A_i$. The left (resp. right) edge of each internal node represents value 0 (resp. 1). Each path from the root to a leaf represents a specific assignment of attribute values. The leaf nodes of the tree 
store two pieces of information: i) $score$: the score of the tuple mapped to that node, and ii) \textit{tupleID List}: list of ids of the tuples mapped to that node. Note that all the tuples that are mapped to the same leaf node in the tree have the same attribute value assignment, i.e. have the same score.
Moreover, the attribute values of a tuple $t$ can be identified by inspecting the path from the root to a leaf node containing $t$. Thus, there is no requirement to store the attribute values of the tuples in the leaf nodes.
Only the leaf nodes that correspond to an actual tuple are present in the tree. 

\begin{exmp}\label{exmp:ST} 
As a running example through out this section, consider the relation $D$ with $n=5$ non-dominated tuples where its projection on $\mathcal{Q}=\{A_1,A_2,A_3,A_4\}$ is depicted in Table~\ref{tab:skylineTreeRunningExample}. 
The last column of the table presents the score of each tuple, utilizing the function $F(\cdot)$  provided in Equation~\ref{eq:score}.
\begin{align}\label{eq:score}
F(t_{\mathcal{Q}}) = \sum_{A_i \in \mathcal{Q}} 2^{i-1} \cdot t[A_i]
\end{align}
\end{exmp}

\begin{table}[!t]
\centering
\caption{Example~\ref{exmp:ST} relation}\label{tab:skylineTreeRunningExample}
\begin{tiny}
\begin{tabular}{cccccc}
    \hline 
    $tupleID$ & $A_1$ & $A_2$ & $A_3$ & $A_4$ & $Score$\\
    \hline 
    $t_1$ & 1 & 1 & 0 & 0 & 12\\
    \hline
    $t_2$ & 0 & 0 & 1 & 1 & 3\\
    \hline
    $t_3$ & 0 & 1 & 1 & 0 & 6\\
    \hline
    $t_4$ & 1 & 0 & 0 & 1 & 9\\
    \hline
    $t_5$ & 1 & 0 & 1 & 0 & 10\\
    \hline
\end{tabular}
\end{tiny}
\end{table}

\begin{figure*}[!t]
\begin{minipage}[t]{0.23\linewidth}
\centering
    \includegraphics[scale=0.80]{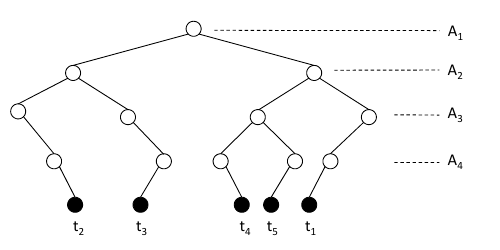}
    %\vspace{-8mm}
    \caption{Tree structure for relation in Example~\ref{exmp:ST}}
    \label{fig:tree}
\end{minipage}
\hspace{1mm}
\begin{minipage}[t]{0.23\linewidth}
\centering
    \includegraphics[scale=0.80]{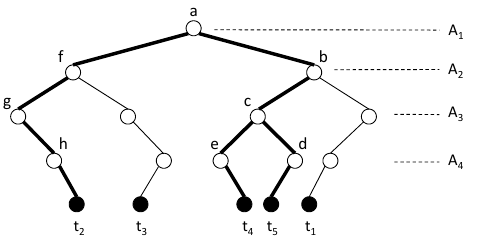}
    %\vspace{-8mm}
    \caption{Prune dominated tuples}
    \label{fig:treePruneDominatedTuples}
\end{minipage}
\hspace{1mm}
\begin{minipage}[t]{0.23\linewidth}
\centering
    \includegraphics[scale=0.80]{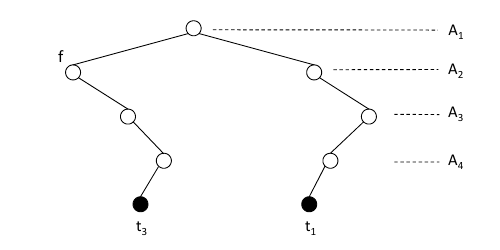}
    %\vspace{-8mm}
    \caption{Tree after removing dominated tuples}
    \label{fig:treePruneDominatedTuplesAfter}
\end{minipage}
\hspace{1mm}
\begin{minipage}[t]{0.23\linewidth}
\centering
    \includegraphics[scale=0.80]{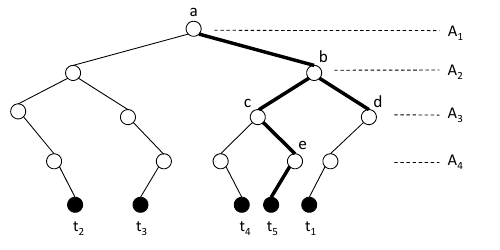}
    %\vspace{-8mm}
    \caption{Check if tuple $t$ is dominated}
    \label{fig:treeCheckIfDominated}
\end{minipage}
\end{figure*}

\vspace{1mm}
\noindent{\bf INSERT($t$):} In order to insert a tuple $t$ into the tree, we start from the root. At level $i$ $(1 \leq i \leq m')$, we check the corresponding attribute value, $t[A_i]$. If $t[A_i] = 0$ (resp. $t[A_i] = 1$) and the left (resp. right) child of current node already exists in the tree, we simply follow the left (resp. right) child. Otherwise, we first have to create a new tree node as left (resp. right) child before traversing it. After reaching the leaf node at level $m'+1$, the $tupleID$ of $t$ is appended to \textit{tupleID List} and the $score$ value is assigned to newly constructed leaf.

\begin{algorithm}[!htb]
\caption{{\bf INSERT}}
\begin{algorithmic}[1]
\label{alg:insertTuple}
\STATE {\bf Input:} Tuple $t$, Node $n$, Level $l$, Query $\mathcal{Q}$;
%\STATE {\bf if} $n$ is $leaf$ node:
\STATE {\bf if} $l == |\mathcal{Q}| + 1$:
    \STATE \hindent {\bf if} $n.score$ is None: $n.score = score(t)$
    \STATE \hindent Append $t[tupleID]$ to $n.tupleIDList$
\STATE {\bf else}:
    \STATE \hindent {\bf if} $t[A_l]==0$:
        \STATE \hindent[2] {\bf if} $n.left$ is $None$:
           % \STATE \hindent[3] Create $left$ child of $t$
           \STATE \hindent[3] temp = {\it New} Node();
           \STATE \hindent[3] $t.left$ = temp;
        \STATE \hindent[2] INSERT($t$, $n.left$, $l+1$)
    \STATE \hindent {\bf if} $t[A_l]==1$:
        \STATE \hindent[2] {\bf if} $n.right$ is $None$:
           % \STATE \hindent[3] Create $right$ child of $t$
           \STATE \hindent[3] temp = {\it New} Node();
           \STATE \hindent[3] $t.right$ = temp;
        \STATE \hindent[2] INSERT($t$, $n.right$, $l+1$)
\end{algorithmic}
\end{algorithm}

\vspace{1mm}
\noindent{\bf PRUNE-DOMINATED-TUPLES($t$):} The pruning algorithm to delete from the tree, tuples dominated by $t$, is recursively developed as follows: We start from the root node of the tree. If $t[A_1] = 1$, we search both the left and right subtree. Otherwise, only the left child is selected. This is because if $t[A_1] = 1$, a tuple $t'$ dominated by $t$ can assume value  0 or 1 on attribute $A_1$. On the other hand, $t$ cannot dominate a tuple $t'$ if $t[A_1] = 0$ and $t'[A_1] = 1$. We follow the same approach at each internal node visited by the algorithm - at level $i$ $(1 \leq i \leq m)$, value of $t[A_i]$ is used to select the appropriate subtree. After reaching a leaf node, we compare $score(t_{\mathcal{Q}})$ with the $score$ value of leaf node. If both values are equal, no action is required, since, all the tuples mapped into the current leaf node have the same attribute value as $t_{\mathcal{Q}}$. Else, the leaf node is deleted from the tree. Upon return from the recursion, we check if both the left and right child of the current (internal) node are empty. In that case, the current node is also deleted from the tree.

Figure~\ref{fig:treePruneDominatedTuples} demonstrates the pruning algorithm for $t = \langle 1,0,1,1 \rangle$. Tuples in the tree that are dominated by $t$ are: $t_2$, $t_4$, and $t_5$. The bold edges represent paths followed by the pruning algorithm. Both the left and right children of node $a$ are visited since $t[A_1] = 1$, whereas, at nodes $f$ and $b$ only the left subtree is selected for searching. The final structure of the tree after deleting the dominated tuples is shown in Figure~\ref{fig:treePruneDominatedTuplesAfter}.

\begin{algorithm}[htb]
\caption{{\bf PRUNE-DOMINATED-TUPLES}}
\begin{algorithmic}[1]
\label{alg:pruneDominatedTuples}
\STATE {\bf Input:} Tuple $t$, Node $n$, Level $l$, Score $s$, Query $\mathcal{Q}$;

\STATE {\bf if} $n$ is $None$ or $n.minScore > s$ {\bf return}

\STATE {\bf if} $l == |\mathcal{Q}| + 1$ and $score(t_{\mathcal{Q}}) \neq n.score$:
    \STATE \hindent Delete $n$ from tree
    \STATE \hindent {\bf return}

\STATE {\bf if} $t[A_l] == 1$:
    \STATE \hindent PRUNE-DOMINATED-TUPLES($t$, $n.right$, $l+1$, $s$)
    \STATE \hindent $s' = s - weight(A_i)$
    \STATE \hindent PRUNE-DOMINATED-TUPLES($t$, $n.left$, $l+1$, $s'$)
\STATE {\bf else}:
    \STATE \hindent PRUNE-DOMINATED-TUPLES($t$, $n.left$, $l+1$, $s$)

\STATE {\bf if} Both $left$ and $right$ children of $n$ is $None$
    \STATE \hindent Delete $n$ from tree
\end{algorithmic}
\end{algorithm}

\vspace{1mm}
\noindent{\bf IS-DOMINATED($t$):} The algorithm starts traversing the tree from the root. For each node visited by the algorithm at level $i$ $(1 \leq i \leq m)$, we check the corresponding attribute value $t[A_i]$. If $t[A_i] = 0$, we search both the left and right subtree; otherwise, we only need to search in the right subtree. This is because when $t[A_i] = 0$, all the tuples dominating $t$ can be either 0 or 1 on attribute $A_i$. If we reach a leaf node that has an attribute value assignment which is different than that of $t$ (i.e., $score \neq score(t)$), $t$ is dominated.  Note that, when $t[A_i] = 0$ both the left and right subtree of the current node can have tuples dominating $t$, while the cost of identifying a dominating tuple (i.e., the number of nodes visited) may vary depending on whether the left or right subtree is visited first. For simplicity, we always search in the right subtree first. If there exists a tuple in the subtree of a node that dominates tuple $t$, we do not need to search in the left subtree anymore. 

Figure~\ref{fig:treeCheckIfDominated} presents the nodes visited by the algorithm in order to check if the new tuple $t = \langle 0,0,1,0 \rangle$ is dominated. We start from the root node $a$ and check the value of $t$ in attribute $A_1$. Since $t[A_1] = 0$, we first search in the right subtree of $a$. After reaching to node $d$, the algorithm back-tracks to $b$ (parent of $d$). This is because $t[A_3] = 1$ and $d$ has no actual tuple mapped under it's right child. Since $t[A_2] = 0$ and we could not identify any dominating tuple in the right subtree of $b$, the algorithm starts searching in the left subtree and moves to node $c$. At node $c$, only the right child is selected, since $t[A_3] = 1$. Applying the same approach at node $f$, we reach the leaf node $e$ that contains the tupleID $t_5$. Since the value of the $score$ variable at leaf node $e$ is different from $score(t)$, we conclude that tuples mapped into $e$ (i.e., $t_5$) dominate $t$.

Please refer to Appendix~\ref{ap:tree-optimizations} for further optimizations on the tree data structure.

\begin{algorithm}[htb]
\caption{{\bf IS-DOMINATED}}
\begin{algorithmic}[1]
\label{alg:isDominated}
\STATE {\bf Input:} Tuple $t$, Node $n$, Level $l$, Score $s$, Query $\mathcal{Q}$; \qquad {\bf Output:} True if $t$ is dominated else False.
\STATE {\bf if} $n$ is $None$ or $s > n.maxScore$: {\bf return}

\STATE {\bf if} $l == |\mathcal{Q}|$ and $score(t_{\mathcal{Q}}) \neq n.score$: {\bf return} True
\STATE {\bf if} $l == |\mathcal{Q}|$ and $score(t_{\mathcal{Q}}) = n.score$: {\bf return} False

\STATE {\bf if} $t[A_l] == 0$:
    \STATE \hindent $s' = s + weight(A_i)$
    \STATE \hindent $dominated$ = IS-DOMINATED($t$, $n.right$, $l+1$, $s'$)
    \STATE \hindent {\bf if} $dominated$ == True: {\bf return} True
    \STATE \hindent {\bf return} IS-DOMINATED($t$, $n.left$, $l+1$, $s$)
\STATE {\bf else}:
    \STATE \hindent {\bf return} IS-DOMINATED($t$, $n.right$, $l+1$, $s$)
\end{algorithmic}
\end{algorithm}

\subsection{Skyline using Tree}\label{sec:ST}

Existing works on skyline computation mainly focus on two optimization criteria: reducing the number of dominance checks (CPU cost), limiting communication cost with the backend database (I/O cost). Sorting-based algorithms reduce the number of dominance check by ensuring that only the skyline tuples are inserted in the candidate skyline list. Whereas, partition-based algorithms achieve this by skipping dominance tests among tuples inside incomparable regions generated from the partition. However, given a list of tuples $\mathcal{T}$ and a new tuple $t$, in order to discard tuples from $\mathcal{T}$ that are dominated by $t$, both the sorting- and partition-based algorithms need to compare $t$ against all the tuples in $\mathcal{T}$. This is also the case when we need to check whether $t$ is dominated by $T$. The tree structure defined in \S\ref{subsec:tree} allows us to perform these operations effectively for categorical attributes. Since the performance gain achieved by the tree structure is independent of the optimization approaches of previous algorithms, it is possible to combine the tree structure with existing skyline algorithms. We now present two algorithms ST-S (Skyline using Tree Sorting-based) and ST-P (Skyline using Tree Partition-based) that incorporates the tree structure into existing algorithm.

\vspace{1mm}
\noindent{\bf ST-S:} ST-S combines the tree structure with a sorting-based algorithm. Specifically, we have selected the SaLSa~\cite{bartolini2008efficient} algorithms that exhibits better performance compared to other sorting-based algorithms. The final algorithm is presented in Algorithm~\ref{alg:st-s}. The tuples are first sorted according to ``maximum coordinate'', maxC, criterion\footnote{Assuming larger values are preferred for each attribute.}. Specifically, Given a skyline query $\mathcal{Q}$, $maxC(t_{\mathcal{Q}}) = (max_{A\in \mathcal{Q}}\{t[A]\}, sum(t_{\mathcal{Q}}))$, where $sum(t_{\mathcal{Q}}) = \sum_{A\in \mathcal{Q}} t[A]$. A tree structure $T$ is used to store the skyline tuples. Note that the monotonic property of the scoring function $maxC(\cdot)$ ensures that all the tuples inserted in $T$ are skyline tuples. The algorithm then iterates over the sorted list one by one, and for each new tuple $t$, if $t$ is not dominated by any tuple in tree $T$, it is inserted in the tree (lines 7-8). For each new skyline tuple, the ``stop point'' $t_{stop}$ is updated if required (line 10-12). The algorithm stops if all the tuples are accessed or $t_{stop}$ dominates the remaining tuple. Detailed description of the ``stop point'' can be found in the original SaLSa paper~\cite{bartolini2008efficient}.

\begin{algorithm}[htb]
\caption{{\bf ST-S}}
\begin{algorithmic}[1]
\label{alg:st-s}
\STATE {\bf Input:} Tuple list $\mathcal{T}$, Query $\mathcal{Q}$ and Tree $T$; \\ {\bf Output:} $\mathcal{S}_\mathcal{Q}$
\STATE Sort tuples in $D$ using a monotonic function $maxC(\cdot)$
\STATE {\bf if} $T \text{ is } None$: $T \leftarrow$ {\it New} Tree()% for storing the candidate skyline set.
\STATE $t_{stop} \leftarrow$ undefined
\STATE {\bf for} each tuple $t \in D$
    \STATE \hindent {\bf if} $t_{stop}^+ \geq maxC(t_{\mathcal{Q}})$ and $t_{stop} \neq t$: {\bf return}
    \STATE \hindent {\bf if not} IS-DOMINATED($t_\mathcal{Q}$, $T.rootNode$, 1, $score(t)$)
        \STATE \hindent[2] INSERT($t_\mathcal{Q}$, $T.rootNode$, 1)
        \STATE \hindent[2] Output $t_\mathcal{Q}$ as skyline tuple.
        \STATE \hindent[2] $t^+ \leftarrow  min_{A \in \mathcal{Q}}\{t[A]\}$
        \STATE \hindent[2] {\bf if} $t^+ > t_{stop}^+$: $t_{stop} \leftarrow t_{\mathcal{Q}}$
\end{algorithmic}
\end{algorithm}

\vspace{1mm}
\noindent{\bf ST-P:} We have selected the state-of-art partition-based algorithm BSkyTree~\cite{lee2014scalable} for designing ST-P. The final algorithm is presented in Algorithm~\ref{alg:st-p}. Given a tuple list $\mathcal{T}$, the SELECT-PIVOT-POINT method returns a pivot tuple $p^V$ such that it belongs to the skyline of $\mathcal{Q}$ (i.e., $\mathcal{S_{\mathcal{Q}}}$). Moreover, $p^V$ partitions the tuples in $\mathcal{T}$ in a way such that the number of dominance test is minimized (details in~\cite{lee2014scalable}). Tuples in $\mathcal{T}$ are then split into $2^{|\mathcal{Q}|}$ lists, each corresponding to one of the $2^{|\mathcal{Q}|}$ regions generated by $p^V$ (lines 7-9). Tuples in $\mathcal{L}[0]$ are dominated by $p^V$, hence can be pruned safely. For each pair of lists $\mathcal{L}[i]$ and $\mathcal{L}[j]$ ($max \geq j> i \geq 1$), if $\mathcal{L}[j]$ partially dominates $\mathcal{L}[i]$, tuples in $\mathcal{L}[i]$ that are dominated by any tuple in $\mathcal{L}[j]$ are eliminated. Finally, skylines in $\mathcal{L}[i]$ are then discovered in recursive manner (lines 10-15).

\begin{algorithm}[htb]
\caption{{\bf ST-P}}
\begin{algorithmic}[1]
\label{alg:st-p}
\STATE {\bf Input:} Tuple list $\mathcal{T}$ and query $\mathcal{Q}$; \\ {\bf Output:} $\mathcal{S}_\mathcal{Q}$
\STATE {\bf if} $|\mathcal{T}| \leq 1$: {\bf return} $\mathcal{T}$
\STATE $max \leftarrow 2^{|\mathcal{Q}|} - 2$ //\textit{Size of the lattice}
\STATE $\mathcal{L}[1, max] \leftarrow \{\}$ 
\STATE $p^V \leftarrow$ SELECT-PIVOT-POINT($\mathcal{T}$)
\STATE $\mathcal{S_\mathcal{Q}} \leftarrow \mathcal{S_\mathcal{Q}} \cup p^V$ //\textit{$p^V$ is a skyline tuple}
\STATE {\bf for} each tuple $t \in \mathcal{T}$
    \STATE \hindent $B^i \leftarrow$ $|\mathcal{Q}|$-bit binary vector corresponds to
    $t$ wrt $p^V$
    \STATE \hindent {\bf if} $i \neq 0$: $\mathcal{L}[i] \leftarrow \mathcal{L}[i] \cup t$
\STATE {\bf for} $i \leftarrow \text{ max to } 1$
    \STATE \hindent $T \leftarrow$ {\it New} Tree()
    \STATE \hindent Insert tuples in $\mathcal{L}[i]$ in $T$
    \STATE \hindent {\bf for} $\forall j \in [max, i)$ : $B^j \succeq B^i$
        \STATE \hindent[2] {\bf for} $\forall t \in \mathcal{L}[j]$: PRUNE-DOMINATED-TUPLES($t_{\mathcal{Q}}$, $T.rootNode$, 1, $score(t_{\mathcal{Q}})$)
    \STATE \hindent $\mathcal{S_\mathcal{Q}} \leftarrow \mathcal{S_\mathcal{Q}} \cup $ ST-P(tuples in $T$)
\STATE {\bf return} $\mathcal{S_{\mathcal{Q}}}$
\end{algorithmic}
\end{algorithm}

\vspace{1mm}
\noindent{\bf Performance Analysis:} We now provide a theoretical analysis of the performance of primitive operations utilized by ST-S and ST-P. To make the theoretical analysis tractable, we assume that the
underlying data is i.i.d., where $p_i$ is the probability of having value 1 on attribute $A_i$.

The cost of INSERT-TUPLE($t_\mathcal{Q}$) operation is $O(m')$, since to insert a new tuple in the tree one only needs to follow a single path from the root to leaf. For IS-DOMINATED($t_\mathcal{Q}$) and PRUNE-DOMINATED-TUPLES($t_\mathcal{Q}$), we utilize the number of nodes visited in the tree as the performance measure of these operations.

Consider a tree $T$ with $s$ tuples;  Let $Cost(l, s)$ be the expected number of nodes visited by the primitive operations.

\begin{theorem}\label{thm:expectedCostSTISDominated}
Considering a relation with $n$ binary attributes where $p_i$ is the probability that a tuple has value 1 on attribute $A_i$, the expected cost of IS-DOMINATED($t_\mathcal{Q}$) operation on a tree $T$, containing $s$ tuples is:
\begin{small}
\begin{align}\label{eq:expectedCostSTISDominated}
    \nonumber
    C(m', s) &= 1 \\
    \nonumber
    C(l, 0) &= 1 \\
    C(l, s) &= 1 + \sum_{i=0}^s {s \choose i} (1-p_l)^i p_l^{s-i} C'(l, i, s-i)
\end{align}
\end{small}
\hspace{-1mm}where $S(l, s-i) = 1 - (1 - \prod\nolimits_{i=1}^{|\mathcal{A}_{ones(t[l+1:m'])}|}p_i)^{s-i}$ and\footnote{$\mathcal{A}_{ones(t[l+1:m'])} = \{A_i | t[A_i] = 1, l+1 \leq i \leq m'\}$ is the set of remaining attributes of $t$ that has value equals 1.} $C'(l, i, s-i) = C(l+1, s-i) + (1-p_l)(1-S(l, s-i))C(l+1, i)$
\end{theorem}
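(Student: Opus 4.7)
I would prove the recurrence by strong induction on the remaining depth $m'-l$, analyzing the cost contributed by the current node and then invoking the inductive hypothesis on its children. The observation that makes the recurrence close cleanly is that, under the i.i.d.\ assumption, the $s$ tuples currently routed into a subtree rooted at level $l$ split between the left and right children of that root according to a $\mathrm{Binomial}(s,p_l)$ law on $A_l$, and the residual attribute values on $A_{l+1},\dots,A_{m'}$ \emph{inside} each child subtree remain i.i.d.\ with the original success probabilities. This independence is what allows $C(l+1,\cdot)$ to be plugged in directly into the expression for $C(l,s)$.

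\textbf{Base cases and decomposition.} For the base cases, at $l=m'$ the call has reached a leaf so Algorithm~\ref{alg:isDominated} only performs the score comparison and returns, for a cost of $1$; and when $s=0$ the recursive call is made on a \texttt{None} pointer and returns immediately after the guard on line~2, again contributing $1$. For the inductive step, visiting the current internal node costs $1$; I then condition on the split $(i,s-i)$ (with $i$ tuples having value $0$ at $A_l$, $s-i$ having value $1$), which occurs with probability $\binom{s}{i}(1-p_l)^i p_l^{s-i}$.

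\textbf{Case analysis on $t[A_l]$.} Given the split, I condition on $t[A_l]$. If $t[A_l]=1$ (probability $p_l$), only the right subtree is searched, so the expected additional cost is $C(l+1,s-i)$ by induction. If $t[A_l]=0$ (probability $1-p_l$), the algorithm always descends right, contributing $C(l+1,s-i)$; then it descends left \emph{only} if no dominator was discovered on the right. A candidate in the right subtree already matches $t$ on $A_l$, so it dominates $t$ iff it has value $1$ on every attribute $A_j$ ($j>l$) with $t[A_j]=1$; by independence this per-candidate probability is $\prod_{A_j\in\mathcal{A}_{ones(t[l+1:m'])}}p_j$, and the probability that \emph{at least one} of the $s-i$ candidates satisfies this is exactly $S(l,s-i)$. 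Hence the left subtree is entered with probability $(1-p_l)\bigl(1-S(l,s-i)\bigr)$, contributing $(1-p_l)\bigl(1-S(l,s-i)\bigr)\,C(l+1,i)$. Combining the two subcases gives the inner quantity $C'(l,i,s-i)$; averaging over the binomial split and adding the $1$ for the current node produces the claimed formula.

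\textbf{Main obstacle.} The subtle point is the use of $C(l+1,s-i)$ as the right-subtree cost: the search in that subtree can itself terminate early as soon as a dominator is found, yet I still charge its full expected cost $C(l+1,s-i)$. This is legitimate because $C(l+1,\cdot)$ is defined precisely as the expected number of nodes visited and therefore already accounts for its \emph{own} early termination; the role of $S(l,s-i)$ is only to capture the marginal probability that the Boolean return value is True, which is what governs whether the \emph{left} subtree is entered. A secondary care-point is that $S(l,s-i)$ depends on the $1$-pattern of $t$ on $A_{l+1},\dots,A_{m'}$, so the recurrence is really stated conditional on that pattern; averaging over $t$ itself does not change the form of the recurrence but would introduce an outer expectation that is standardly suppressed in this line of i.i.d.\ cost analysis.
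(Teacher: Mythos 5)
Your proposal is correct and follows essentially the same route as the paper's proof: the same base cases, the same $\mathrm{Binomial}(s,p_l)$ split of the $s$ tuples at level $l$, the same case analysis on $t[A_l]$, and the same use of $S(l,s-i)$ as the probability that some tuple in the right subtree dominates $t$ (which gates entry into the left subtree). One small wording slip: a right-subtree candidate does not ``match'' $t$ on $A_l$ when $t[A_l]=0$ but strictly exceeds it there, which is exactly why domination reduces to having value $1$ on the attributes in $\mathcal{A}_{ones(t[l+1:m'])}$ with no further strictness requirement; your formula is the correct one regardless.
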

Please refer to Appendix~\ref{sec:appendixProof} for the proof.

\begin{theorem}\label{thm:expectedCostSTPruneDominatedTuples}
Given a boolean relation $D$ with $n$ tuple and the probability of having value 1 on attribute $A_i$ being $p_i$, the expected cost of PRUNE-DOMINATED-TUPLES($t_\mathcal{Q}$) operation on a tree $T$, containing $s$ tuples is
\begin{small}
\begin{align} \label{eq:expectedCostSTPruneDominatedTuples}
    \nonumber
    C(m', s) &= 1 \\
    \nonumber
    C(l, 0) &= 1 \\
    C(l, s) &= 1 + \sum_{i=0}^s {s \choose i} (1-p_l)^i p_l^{s-i} (C(l+1, i) + p_lC(l+1, s-i))
\end{align}
\end{small}
\end{theorem}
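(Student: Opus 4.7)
The plan is to prove the recurrence by induction on the remaining depth $m' - l$, applying the law of total expectation twice: first over how the $s$ tuples split between the left and right subtrees at level $l$, and second over the value $t[A_l]$ that governs which children Algorithm~\ref{alg:pruneDominatedTuples} actually descends into.

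First I would dispatch the base cases. $C(m', s) = 1$ because at the leaf level the routine performs only the $score$ comparison on line~3 and either deletes the leaf or returns, so exactly one node is visited. $C(l, 0) = 1$ because the null-guard on line~2 causes the call to terminate immediately after one visit. The leading ``$+1$'' in the general recurrence similarly accounts for one unit of cost spent visiting node $n$ itself before deciding which children to recurse into.

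For the inductive step, fix a subtree rooted at level $l$ containing $s$ tuples. Under the i.i.d.\ assumption, each of those $s$ tuples independently has $A_l = 0$ with probability $1 - p_l$, so the number $i$ routed to the left child is $\mathrm{Binomial}(s, 1 - p_l)$, contributing the weight $\binom{s}{i}(1-p_l)^i p_l^{s-i}$ in the outer sum. Crucially, conditioned on this split, the tuples inside each child remain mutually independent and identically distributed on the unseen attributes $A_{l+1}, \ldots, A_{m'}$, which is what licenses the recursive use of $C(l+1, \cdot)$. Next I condition on $t[A_l]$. With probability $1 - p_l$ we have $t[A_l] = 0$ and line~11 descends only into the left child, paying expected cost $C(l+1, i)$; with probability $p_l$ we have $t[A_l] = 1$ and lines~7--9 descend into both, paying $C(l+1, i) + C(l+1, s-i)$. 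Combining the two cases gives the conditional cost $(1-p_l)\,C(l+1, i) + p_l\bigl(C(l+1, i) + C(l+1, s-i)\bigr) = C(l+1, i) + p_l\,C(l+1, s-i)$, which after the binomial weighting and the ``$+1$'' for the current node reproduces the claimed formula.

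The step I expect to be the main obstacle is rigorously justifying that the subcollections in the two children remain i.i.d.\ on the remaining attributes once we have conditioned on the $A_l$-split. This follows because under the i.i.d.\ model the attributes within a single tuple are mutually independent, so revealing the $A_l$-value of each tuple leaves the joint distribution of $A_{l+1}, \ldots, A_{m'}$ unchanged; once this is nailed down, the recursion unfolds mechanically. Note that, unlike the analogous analysis for IS-DOMINATED in Theorem~\ref{thm:expectedCostSTISDominated}, the pruning routine has no early termination triggered by finding a dominating tuple, so no survival-probability factor analogous to $S(l, s-i)$ appears, which is precisely why the present recurrence is strictly simpler.
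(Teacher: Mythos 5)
Your proposal is correct and follows essentially the same argument as the paper: the same base cases, the same binomial split of the $s$ tuples between the two children, and the same conditioning on $t[A_l]$ yielding $(1-p_l)C(l+1,i) + p_l\bigl(C(l+1,i)+C(l+1,s-i)\bigr)$, which simplifies to the stated recurrence. The additional remarks on why the children remain i.i.d.\ after conditioning on the $A_l$-split make explicit a step the paper leaves implicit, but the route is the same.
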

The proof is available in Appendix~\ref{sec:appendixProof}

Figure~\ref{fig:expectedCostST}
uses Equations~\ref{eq:expectedCostSTISDominated} and~\ref{eq:expectedCostSTPruneDominatedTuples} to provide an expected cost for the IS-DOMINATE and PRUNE-DOMINATED-TUPLES operations, for varying numbers of tuples in $T$ ($s$) where $m'=20$.
%presents a simulation of $C(l, s)$ as a function of $s$ (number of tuples in $T$) for IS-DOMINATE operation over a relation with $m=20$ attributes. 
%\textcolor{red}{Gautam: explain why you choose to give a simulation. Was it because a closed form was difficult?}
We compare its performance with the appraoch, where candidate skyline tuples are organized in a list.
Suppose there are $s$ tuples in the list; the best case for the domination test occurs when the first tuple in the list dominates the input tuple ($O(1\times m')$), while in the worst case, none or only the very last tuple dominates it ($O(s\times m')$)~\cite{borzsony2001skyline}. Thus, on average the dominance test iterates over half of its candidate list (i.e., $\dfrac{s}{2}\times m'$ comparisons).
On the other hand, in order to prune tuples in the list that are dominated by $t_\mathcal{Q}$, existing algorithms need to compare $t_\mathcal{Q}$ with all the entries in the list. Hence, expected cost of PRUNE-DOMINATED-TUPLES is $s \times m'$. From the figure, we can see that the expected number of comparisons required by the two primitive operations are significantly less when instead of a list, tuples are organized in a tree. Moreover, as $p_i$ increases, the cost of the primitive operations decreases. This is because, when the value of $p_i$ is large, the probability of following left edge (edges corresponds value $0$) of a tree node decreases. 

%We investigate the performance of the primitive operations over non-uniform i.i.d. relation by setting different $p_i$ value to each attribute. Specifically, for each attribute, we set $p_i$ uniformly in range $[0.2, 0.6]$. Even with this skewed distribution, the cost of the primitive operations doesn't increase much.

%The expected number of comparisons required by an IS-DOMINATE operation is significantly less than the comparisons performed when candidate skylines are organized in a list. Moreover, the expected cost of IS-DOMINATE over a relation with non-uniform attributes is slightly less than the expected cost of uniform attributes. For non-uniform , we set $p_1 = 0.8$ (probability of having $1$ on attribute $A_1$) and  $p_{20} = 0.2$ (probability having $1$ on attribute $A_{20}$). All the other values are set in between, i.e., $p_i = 0.8 - 0.03 \times (20 - i) \, (2 \leq i \leq 19)$.

The above simulations show that the tree structure can reduce the cost of dominance test effectively thus improving the overall performance of ST algorithms. Although the analysis has been carried out for i.i.d. data, our experimental results in \S\ref{sec:experiments} show similar behavior for other types of datasets.

\begin{figure}
\begin{subfigure}{.49\linewidth}
  \centering
  \includegraphics[scale=.45]{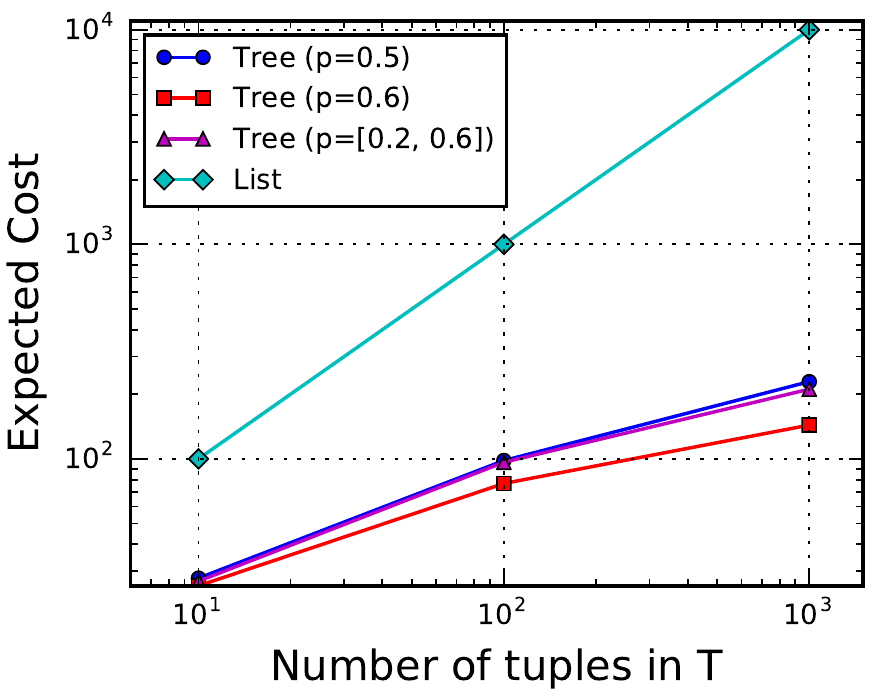}
  %\vspace{-2mm}
  \caption{\begin{tiny}IS-DOMINATED\end{tiny}}
  \label{fig:expectedCostIsDominated}
\end{subfigure}
\begin{subfigure}{.49\linewidth}
  \centering
  \includegraphics[scale=.45]{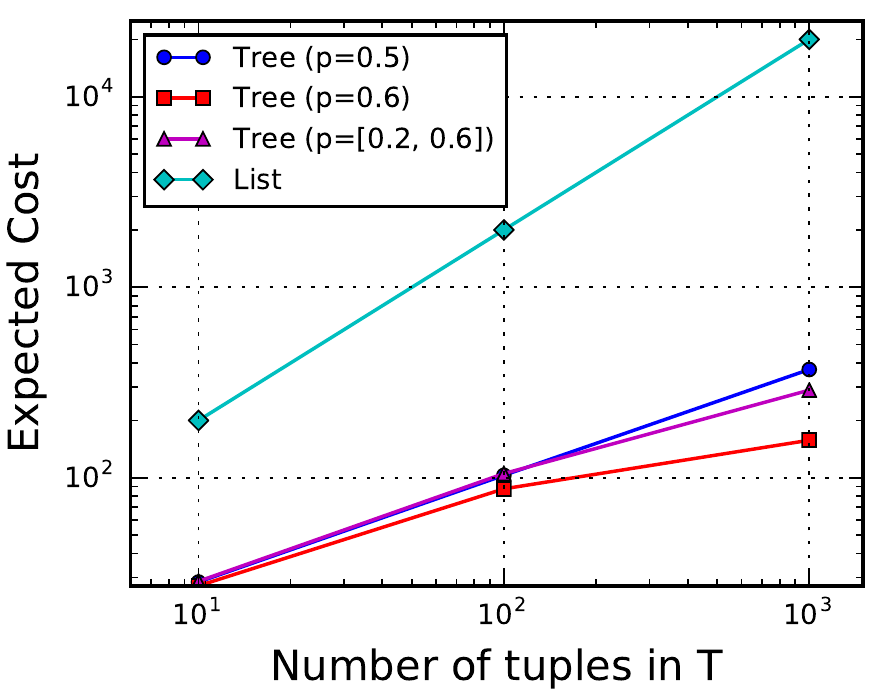}
  \caption{\begin{tiny}PRUNE-DOMINATED-TUPLES \end{tiny}}
  \label{fig:expectedPrunedDominated}
\end{subfigure}
%\vspace{-3mm}
\caption{Expected cost of IS-DOMINATED and PRUNE-DOMINATED-TUPLES operations as a function of $s$}
\label{fig:expectedCostST}
\end{figure}

\subsection{Extension for Categorical Attributes}\label{ap:STCategorical}
We now discuss how to modify ST algorithm for relations having categorical attributes. We need to make the following two changes:

\begin{itemize}
    \item The tree structure designed in \S\ref{subsec:tree} needs to be modified for categorical attribute.
    \item We also need to change the tree traversal algorithms used in each of the three primitive operations.
\end{itemize}

\noindent{\bf Tree structure:} The tree structure will not be binary anymore. In order to incorporate categorical attributes, each node $u$ at level $l$ ($1 \leq l \leq m$) of the tree now should have $|Dom(A_l)|$ children, one for each attribute value $v \in Dom(A_l)$. We shall index the edges from left to right, where the left most edge corresponds to the lowest attribute value and the attribute value corresponding to each edge increases as we move from left most edge to right most edge.

\vspace{1mm}
\noindent{\bf INSERT($t$):} After reaching a node $u$ at level $l$, select the $t[A_l]$-th child of $u$ for moving to the next level of the tree.

\vspace{1mm}
\noindent{\bf IS-DOMINATED($t$):} We need to follow all the edges that has index value grater or equal to $t[A_l]$.

\vspace{1mm}
\noindent{\bf PRUNE-DOMINATED-TUPLES($t$):} Search in all the subtrees reachable by following edges with index value less than or equal to $t[A_l]$.

\section{Subspace Skyline using Sorted \\ Lists} \label{sec:subsky}

In this section, we consider the availability of sorted lists $L_1, L_2, \ldots L_m$, as per \S\ref{sec:preliminaries} and utilize them to design efficient algorithms for subspace skyline discovery.
We first briefly discuss a baseline approach that is an extension of LS~\cite{morse2007efficient}.
Then in \S\ref{sec:topdown}, we overcome the barriers of the baseline approach proposing an algorithm named {\bf TOP-DOWN}. The algorithm applies a top-down on-the-fly parsing of the subspace lattice and prunes the dominated branches.
However, the expected cost of TOP-DOWN {\em exponentially} depends on the value of $m$  (Appendix~\ref{ap:top-down}).
We then propose {\bf TA-SKY} (Threshold Algorithm for Skyline) in \S\ref{sec:TASky} that does not have such a dependency. In addition to the sorted lists, TA-SKY also utilizes the ST algorithm proposed in \S\ref{sec:3} for computing skylines.

\begin{table}[!t]
\centering
\caption{Example: Input Table}\label{tab:runningExampleSubspaceSkyline}
\begin{tiny}
\begin{tabular}{cccccc}
    \hline 
     & $A_1$ & $A_2$ & $A_3$ & $A_4$ & $A_5$ \\
    \hline 
    $t_1$ & 0 & 1 & 0 & 1 & 1\\
    \hline
    $t_2$ & 0 & 0 & 1 & 1 & 0\\
    \hline
    $t_3$ & 0 & 0 & 1 & 0 & 1\\
    \hline
    $t_4$ & 0 & 0 & 0 & 1 & 1\\
    \hline
    $t_5$ & 1 & 0 & 1 & 1 & 1\\
    \hline
    $t_6$ & 1 & 1 & 1 & 0 & 0\\
    \hline
\end{tabular}
\end{tiny}
\end{table}

\begin{figure}[!ht]
  \begin{minipage}[t]{0.49\linewidth}
    \centering
    \includegraphics[scale=1.2]{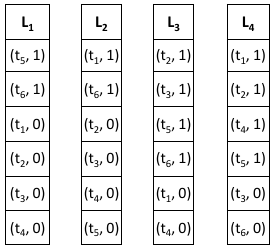}
    \caption{Example: Sorted Lists, Organization 1} 
    \label{fig:sortedLists}
  \end{minipage}
  \hspace{1mm}
  \begin{minipage}[t]{0.49\linewidth}
    \centering
    \includegraphics[scale=1.2]{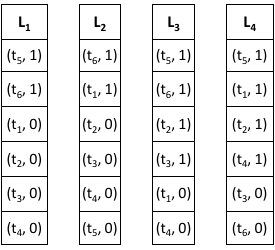}
    \caption{Example: Sorted Lists, Organization 2}
    \label{fig:sortedListsOptimized}
  \end{minipage}
\end{figure}

\begin{exmp}\label{exmp:subspaceSkyline}
Let $\mathcal{Q} \subseteq \mathcal{A}$ denotes the set of attributes in a subspace skyline query and $D_{\mathcal{Q}}$ be the projection of $D$ in $\mathcal{Q}$. We denote the set of sorted lists corresponding to a query (one for each attribute involved in the query) as $\mathcal{L_Q}$, $\mathcal{L_Q} = \{ L_i | A_i \in \mathcal{Q} \}$. Also, let $m' \leq m$ be $|\mathcal{Q}|$. Our running example uses the relation shown in Table~\ref{tab:runningExampleSubspaceSkyline} through out this section. There are a
total of  $n=6$ tuples, each having $m=5$ attributes. Consider a subspace skyline query $\mathcal{Q} = \{A_1, A_2, A_3, A_4\}$, thus, $m' = 4$. Figure~\ref{fig:sortedLists} shows the corresponding sorted lists $\mathcal{L_Q} = \{L_1, L_2, L_3, L_4 \}$.
\end{exmp}

\vspace{1mm}
\noindent{\bf BASELINE:} We use sorted lists in $\mathcal{L_Q}$ to construct the projection of each tuple $t \in D$ in the query space. For this, we shall perform $n$ sequential accesses on sorted list $L_1 \in \mathcal{L_Q}$. For each $(tupleID, value)$ pair returned by sequential access, we create a new tuple $t_{new}$. $t_{new}$ has $tupleID$ as its id and $t_{new}[A_1] = value$. The remaining attribute values of $t_{new}$ are set by performing random access on sorted list $L_j$ ($\forall j \in [2,m']$). After computing the projections of all tuples in query space, we create a lattice over $\mathcal{Q}$ and 
run the LS algorithm to discover the subspace skyline.

\vspace{3mm}
\noindent We identify the following problems with BASELINE:
\begin{itemize}
\itemsep0em
\item It makes two passes over all the tuples in the relation.
\item It requires the construction of the complete lattice of size $|Dom(\mathcal{Q})|$. For example, when $Dom(A_i) = 4$ and $m'=15$, the lattice has more than {\em one billion} nodes; yet the algorithm needs to map the tuples into the lattice.
\end{itemize}

One observation is that for relations with categorical attributes, especially when $m'$ is relatively small, skyline tuples are more likely to be discovered at the upper levels of the lattice. This motivated us to seek alternate approaches.
Unlike BASELINE, TOP-DOWN and the TA-SKY algorithm are designed in a way that they are capable of answering subspace skyline queries by traversing a small portion of the lattice, and more importantly {\em without the need to access the entire relation}.

\subsection{TOP-DOWN}\label{sec:topdown}

\noindent{\bf Key Idea:} Given a subspace skyline query $\mathcal{Q}$, we create a lattice capturing the dominance relationships among the tuples in $D_{\mathcal{Q}}$. Each node in the lattice represents a specific attribute value combination in query space, hence, corresponds to a potential tuple in $D_{\mathcal{Q}}$. For a given lattice node $u$, if there exist tuples in $D_{\mathcal{Q}}$ with attribute value combination same as $u$, then all tuples in $D_{\mathcal{Q}}$ corresponding to nodes dominated by $u$ in the lattice are also dominated. TOP-DOWN utilizes this observation to compute skylines for a given subspace skyline query. Instead of iterating over the tuples, TOP-DOWN traverses the lattice nodes from top to bottom; it utilizes sorted lists $\mathcal{L_Q}$ to search for tuples with specific attribute value combinations. When $|\mathcal{Q}|$ is relatively small, it is likely one will discover all the skyline tuples just by checking few attribute value combinations, without considering the rest of the lattice. However, the expected cost of TOP-DOWN increases exponentially as we increase the query  length.
Please refer to Appendix~\ref{ap:top-down} for the details and the limitations of TOP-DOWN.

\subsection{TA-SKY}\label{sec:TASky}
We now propose our second algorithm, Threshold Algorithm for Skyline (TA-SKY) in order to answer subspace skyline queries. Unlike TOP-DOWN that exponentially depends on $m$, as we shall show in \S\ref{sec:TASKY-performance}, TA-SKY has a worst case time complexity of $O(m'n^2)$; in addition, we shall also study the expected cost of TA-SKY.
The main innovation in TA-SKY is that it follows the style of the well-known Threshold Algorithm (TA)~\cite{fagin2003optimal} for Top-$k$ query processing, except that it is used for solving a skyline problem rather than a Top-$k$ problem. 

TA-SKY iterates over the sorted lists $\mathcal{L_Q}$ until a stopping condition is satisfied. At each iteration, we perform $m'$ parallel sorted access, one for each sorted list in $\mathcal{L_Q}$. Let $cv_{ij}$ denote the current value returned from sorted access on list $L_j \in \mathcal{L_Q}$ $(1 \leq j \leq m')$ at iteration $i$. Consider $\tau_i$ be the set of values returned at iteration $i$, $\tau_i = \{cv_{i1}, cv_{i1}, \ldots, cv_{im'}\}$. We create a synthetic tuple $t_{syn}$ as the \textit{threshold value} to establish a stopping condition for TA-SKY. The attribute values of synthetic tuple $t_{syn}$ are set according to the current values returned by each sorted list. Specifically, at iteration $i$, $t_{syn}[A_j] = cv_{ij}, \forall j \in [1, m']$. In other words, $t_{syn}$ corresponds to a potential tuple with the highest possible attribute values that has not
been seen by TA-SKY yet. 

In addition, TA-SKY also maintains a candidate skyline set. The candidate skyline set materializes the skylines among the tuples seen till the last stopping condition check. We use the tree structure described in \S\ref{sec:ST} to organize the candidate skyline set. Note that instead of checking the stopping condition at each iteration, TA-SKY considers the stopping condition at iteration $i$ only when $\tau_i \neq \tau_{i-1}$ $(2 \leq i \leq n)$.  $\tau_i \neq \tau_{i-1}$ if and only if $cv_{(i-1)j} \neq cv_{ij}$ $(1 \leq j \leq m')$ for at least one of the $m'$ sequential accesses. This is because the stopping condition does not change among iterations that have the same $\tau$ value. Let us assume the value of $\tau$ changes at the current iteration $i$ and the stopping condition was last checked at iteration $i'$ ($i' < i)$. Let $\mathcal{T}$ be the set of tuples that are returned in, at least one of the sequential accesses between iteration $i'$ and $i$. For each tuple $t \in \mathcal{T}$, we perform random access in order to retrieve the values of missing attributes (i.e., attributes of $t_\mathcal{Q}$ for which we do not know the values yet). Once the tuples in $\mathcal{T}$ are fully constructed, TA-SKY compares them against the tuples in the candidate skyline set. For each tuple $t \in \mathcal{T}$  three scenarios can arise:
\begin{enumerate}
    \itemsep0em
    \item $t$ dominates a tuple $t'$ in the tree (i.e., candidate skyline set), $t'$ is deleted from the tree.
    \item $t$ is dominated by a tuple $t'$ in the tree, it is discarded since it cannot be skyline.
    \item $t$ is not dominated by any tuple $t'$ in the tree, it is inserted in the tree.
\end{enumerate}

Once the candidate skyline set is updated with tuples in $\mathcal{T}$, we compare $t_{syn}$ with the tuples in the candidate skyline set. The algorithm stops when $t_{syn}$ is dominated by any tuple in the candidate skyline set.

We shall now explain TA-SKY for the subspace skyline query $\mathcal{Q}$ of Example~\ref{exmp:subspaceSkyline}. Sorted lists $\mathcal{L_Q}$ corresponding 
to query $\mathcal{Q}$ are shown in Figure~\ref{fig:sortedLists}. At iteration 1, TA-SKY retrieves the tuples $t_1, t_2$ and $ t_5$ by sequential access. For $t_1$ we know its value on attributes $A_2$ and $A_4$ whereas for $t_2$ and $t_5$ we know their value on $A_3$ and $A_1$ respectively. At this position we have $\mathcal{T} = \{ t_1, t_2, t_5 \}$ and $\tau_1 = \{1, 1, 1, 1\}$. Note that in addition to storing the tupleIDs that we have seen so far, we also keep track of the attribute values that are known from sequential access. After iteration 2,  $\mathcal{T} = \{ t_1, t_2, t_3, t_5, t_6\}$ and $\tau_2 = \{1, 1, 1, 1\}$. At iteration 3 we retrieve the values of $t_1, t_2, t_5$ and $ t_4$ on attributes $A_1, A_2, A_3,$ and $A_4$ respectively and update the corresponding entries $\mathcal{T}$.  Since $\tau_3 = \{0, 0, 1, 1\}$ is different from $\tau_2$, TA-SKY checks the stopping condition. First, we get the missing attribute values (attribute values which are not known from sequential access) of each tuple $t \in \mathcal{T}$. This is done performing random access on the appropriate sorted list in $\mathcal{L_Q}$. After all the tuples in $\mathcal{T}$ are fully constructed, we update the candidate skyline set using them. The final candidate skyline set is constructed after considering all the tuples in $\mathcal{T}$ is $\{t_1, t_5, t_6 \}$. Since the synthetic tuple $t_{syn} = \langle 0, 0, 1, 1 \rangle$ corresponds to $\tau_3$ is dominated by the candidate skyline set, we stop scanning the sorted lists and output the tuples in the candidate skyline set as the skyline answer set.

The number of tuples inserted into $\mathcal{T}$ (i.e., partially retrieved by sequential accesses) before the stopping condition is satisfied, impacts the performance of TA-SKY. This is because for each tuple $t \in \mathcal{T}$, we have to first perform random accesses in order to get the missing attribute values of $t$ and then compare $t$ with the tuples in the candidate skyline set in order to check if $t$ is skyline. Both the number of random accesses and number of dominance tests increase the execution time of TA-SKY. Hence, it is desirable to have a small number of entries in $\mathcal{T}$.  We noticed that the number of tuples inserted in $\mathcal{T}$ by TA-SKY depends on the organization of \textit{(tupleID, value)} pairs (i.e., ordering of pairs having same $value$) in sorted lists. Figure~\ref{fig:sortedListsOptimized} displays sorted lists $\mathcal{L'_Q}$ for the same relation in Example~\ref{exmp:subspaceSkyline} but with different organization. Both with $\mathcal{L_Q}$ and $\mathcal{L'_Q}$ TA-SKY stops at iteration 3. However, For $\mathcal{L_Q}$ after iteration 3, $\mathcal{T} = \{t_1, t_2, t_3, t_4, t_5, t_6\}$ and we need to make a total of 12 random accesses and 12 dominance tests\footnote{For each tuple $t \in \mathcal{T}$, we need to perform two dominance checks: i) if $t$ is dominating any tuple in the candidate skyline set and ii) if $t$ is dominated by tuples in the candidate skyline set.}. On the other hand, with $\mathcal{L'_Q}$, after iteration 3 we have $\mathcal{T} = \{t_1, t_2, t_5, t_6\}$, requiring only 4 random accesses and 8 dominance tests.

One possible approach to improve the performance of TA-SKY is to re-organize the sorted lists before running the algorithm for a given subspace skyline query. Specifically, $\forall t, t' \in D$ that $t[A_i] = t'[A_i]$, position $t$ before $t'$ in the sorted list $L_i$ $(1 \leq i \leq m')$ if $t$ has better value than $t'$ on the remaining attributes. However, re-arranging the sorted lists for each subspace skyline query will be costly. %Moreover, this also diminishes the goal of TA-SKY - we want to discover the skylines without the need to access all the tuples and with a small number of dominance checks.

We now propose several optimization techniques that enable TA-SKY to compute skylines without considering all the entries in $\mathcal{T}$.

\vspace{1mm}
\noindent{\bf Selecting appropriate entries in $\mathcal{T}$:} Our goal is to only perform random access and dominance checks for tuples in $\mathcal{T}$ that are likely to be skyline for a given subspace skyline query. Consider a scenario where TA-SKY needs to check the stopping condition at iteration $k$, i.e, $\tau_k \neq \tau_{(k-1)}$. Let $\mathcal{Q'}$ be the set of attributes for which the value returned by sequential access at iteration $k$ is different from $(k-1)$-th iteration, $\mathcal{Q'} = \{A_i | A_i \in \mathcal{Q}, cv_{ki} < cv_{(k-1)i} \}$. In order for the tuple $t_{syn}$ to be dominated, there must exist a tuple $t' \in \mathcal{T}$ that has $t'[A_i] \geq t_{syn}[A_i]$, $\forall A_i \in \mathcal{Q}$ and $\exists A_i \in \mathcal{Q}$ $s.t.$ $t'[A_i] > t_{syn}[A_i]$. Note that each tuple $t \in \mathcal{T}$ has $t[A_i] = t_{syn}[A_i], \forall A_i \in \mathcal{Q} \setminus \mathcal{Q'}$. This is because for all $A_i \in \mathcal{Q} \setminus \mathcal{Q'}$ sorted access returns same value on both $(k-1)$-th and $k$-th iteration (i.e., $cv_{(k-1)i} = cv_{ki}$). Hence, the only way a tuple $t' \in \mathcal{T}$ can dominate $t_{syn}$ is to have a larger value on any of the attributes in $\mathcal{Q'}$. Therefore, we only need to consider a subset of tuples $\mathcal{T'} = \{ t | t \in \mathcal{T}, \exists A_i \in \mathcal{Q} \setminus \mathcal{Q} \text{ s.t. } t[A_i] = cv_{(k-1)i} \}$. Note that it is still possible that $\exists t, t' \in \mathcal{T'} \text{ s.t. } t \succ_{\mathcal{Q}} t'$. Thus, we need to only consider the tuples that are skylines among $\mathcal{T'}$ and the candidate skyline set. To summarize, before checking the stopping condition at iteration $k$, we have to perform the following  operations: (i) Select a subset of tuples $\mathcal{T'}$ from $\mathcal{T}$ that are likely to dominate $t_{syn}$, (ii) For each tuple $t \in \mathcal{T}$ get the missing attribute values of $t$ performing random access on appropriate sorted lists, (iii) Update the candidate skyline set using the skylines in $\mathcal{T'}$, and (iv) Check if $t_{syn}$ is dominated by the updated candidate skyline set.

Note that in addition to reducing the number of random access and dominance test, the above optimization technique makes the TA-SKY algorithm {\em progressive}, i.e, tuples that are inserted into the candidate skyline set will always be skyline in the query space $\mathcal{Q}$.
This characteristic of TA-SKY makes it suitable for real-world web applications where instead of waiting for all the results to be returned users want a subset of the results very quickly.

%We now highlight this optimization technique for the subspace skyline query $\mathcal{Q}$ of Example~\ref{exmp:subspaceSkyline} and the sorted lists in Figure~\ref{fig:sortedLists}. After iteration 3 we have $\mathcal{T} = \{t_1, t_2, t_3, t_4, t_5, t_6\}$ and $\tau_3 = {0, 0, 1, 1}$. Since $\tau_3$ is different from $\tau_2$ on attributes $A_1$ and $A_2$, we only need to consider tuples in $\mathcal{T}$ that have value 1 on $A_1$ and/or $A_2$. Therefore, $\mathcal{T'} = \{ t_1, t_5, t_6 \}$. After obtaining the missing attribute values of tuples in  $\mathcal{T'}$, using random access, it turns out that all of them are skyline in $\mathcal{T'}$. Hence, we update the candidate skyline set using tuples in $\mathcal{T'}$. Finally, since the synthetic tuple $t_{syn} = \langle 0, 0, 1, 1 \rangle$, corresponding to iteration 3, is dominated by the candidate skyline set, we stop the algorithm. Compared to the basic TA-SKY algorithm which requires a total of 12 random accesses and 12 dominance tests, this optimization enables TA-SKY to stop after only 5 random accesses and 6 dominance tests.

\vspace{1mm}
\noindent{\bf Utilizing the ST algorithms:} We can utilize the ST algorithms for discovering the skyline tuples from $\mathcal{T'}$. This way we can take advantages of the optimization approaches proposed in \S\ref{sec:3}. For example, we can call ST-S algorithm with parameter: tree $T$ (stores all the tuples discovered so far) and tuple list $\mathcal{T'}$. The output skyline tuples in  $\mathcal{T'}$ that are not dominated by $T$. Moreover, after sorting the tuples in ST-S, if we identify that $score(t_i) = score(t_{i-1})$ $(2 \leq i \leq |\mathcal{T'}|)$ and $t_{i-1}$ is dominated, we can safely mark $t_i$ as dominated. This is because $score(t_i) = score(t_{i-1})$ implies that both $t_i$ and $t_{i-1}$ have same attribute value assignment. When the number of attributes in a subspace skyline query is small, this approach allows us to skip a large number of dominance tests.

The pseudocode of TA-SKY, after applying the optimizations above, is presented in Algorithm~\ref{alg:taSky}.

\begin{algorithm}[!htb]
\caption{{\bf TA-SKY}}
\begin{algorithmic}[1]
\label{alg:taSky}
\STATE {\bf Input:} Query $\mathcal{Q}$, Sorted lists $\mathcal{L_Q}$; \\ {\bf Output:} $\mathcal{S}_\mathcal{Q}$.
\STATE $T = $ {\it New} Tree(); $\mathcal{T} = \emptyset$
\STATE {\bf repeat}
    \STATE \hindent $\tau = \emptyset$
    \STATE \hindent {\bf for} each sorted list $L_i \in \mathcal{L_Q}$
        \STATE \hindent[2] $A_i$ = Attribute corresponds to $L_i$
        \STATE \hindent[2] $(tupleID, value) = SortedAccess(L)$
        \STATE \hindent[2] $\mathcal{T}[tupleID][A_i] = value$
        \STATE \hindent[2] $\tau[A_i] = value$
    \STATE \hindent {\bf if} $\tau$ remains unchanged from prev. iteration:
        \STATE \hindent[2] {\bf continue;}
    \STATE \hindent $\mathcal{Q'} = \{A_i | A_i \in \mathcal{Q}, \tau[A_i] \text{ changed from prev.iteration}\}$
    \STATE \hindent $\mathcal{T'} = \{t | t \in \mathcal{T}, \exists A_i \in \mathcal{Q'}, \mathcal{T}[t][A_i] \text{ is set} \}$
    \STATE \hindent Delete entries from $\mathcal{T}$ that are inserted in $\mathcal{T'}$
    \STATE \hindent {\bf for} each $t \in \mathcal{T'}$
        \STATE \hindent[2] {\bf for} each attribute $A_i \in Q \setminus Q'$
            \STATE \hindent[3] {\bf if} $t[A_i]$ is missing:
                \STATE \hindent[4] $t[A_i]= RandomAccess(L, A_i)$
        
        \STATE \hindent[2] Update $score$ of $t$
    %\STATE \hindent Sort $\mathcal{T'}$ is descending order of tuple's $score$ value
    %\STATE \hindent {\bf for} each $t \in \mathcal{T'}$
    %    \STATE \hindent[2] {\bf if not} IS-DOMINATED($t$, $T.root$, 1, $score(t)$):
    %        \STATE \hindent[3] INSERT($t$, $T.rootNode$, 1)
    %        \STATE \hindent[3] Output $t$ as skyline tuple.
    \STATE \hindent ST-S($\mathcal{T}$, $\mathcal{Q}$, $T$)
    \STATE \hindent $t_{syn}=$ Synthetic tuple with values of $\tau$
\STATE {\bf until} IS-DOMINATED($t_{syn}$, $T.root$, 1, $score(t_{syn})$)
\end{algorithmic}
\end{algorithm}

\subsubsection{Performance Analysis}\label{sec:TASKY-performance}

\vspace{1mm}
\noindent{\bf Worst Case Analysis:} In the worst case, TA-SKY will exhaust all the $m^\prime$ sorted lists. Hence, will perform $O(m^\prime n)$ sorted and $O(m^\prime n)$ random accesses. After all the tuples are fully constructed, for each tuple $t$, we need to check whether any other tuple in $T$ dominates $t$. The cost of each dominance check operation is $O(m^\prime n)$. Hence, cost of $n$ dominance checks is $O(m^\prime n^2)$. Therefore, the worst case time complexity of TA-SKY is $O(m'n^2)$

\vspace{1mm}
\noindent{\bf Expected Cost Analysis:}

\begin{lemma}\label{lemma:expectedDiscovery}
Considering $p_i$ as the probability that a tuple has value 1 on the binary attribute $A_i$, the expected number of tuples discovered by TA-SKY after $i$ iterations is:
\begin{align}\label{eq:expectedDiscovery}
n P_{seen}(t,i)
\end{align}
where $P_{seen}(t,i)$ is computed using Equation~\ref{eq:seen}.
\begin{align}\label{eq:seen}
\nonumber
P_{seen}(t,i) = 1 - \prod_{j=1}^{m^\prime} \Big( & (1 - p_j) \big(\sum_{k=0}^{i-1}P_{L_j}(k)\frac{n-i}{n-k} + \sum_{k=i}^{n}P_{L_j} \big) \\
                                                  & + p_j\sum_{k=i+1}^n P_{L_j}(k) %\frac{k-i}{k} \Big)
\end{align}
\end{lemma}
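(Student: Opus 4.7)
The plan is to reduce the global expectation to a per-tuple probability and then decompose that probability over the $m'$ sorted lists. By linearity of expectation, the expected number of tuples discovered in $i$ iterations equals $\sum_{t\in D} P(t \text{ seen after } i \text{ iterations})$. Because the data are i.i.d.\ across tuples, every tuple has the same seen-probability; calling this common value $P_{seen}(t,i)$, the aggregate expectation collapses to $n\cdot P_{seen}(t,i)$, which is exactly Equation~\ref{eq:expectedDiscovery}. It therefore suffices to prove Equation~\ref{eq:seen}.

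To establish Equation~\ref{eq:seen}, I would write $P_{seen}(t,i) = 1 - P(t \text{ unseen in every list})$ and factor the joint unseen event across the lists $L_1, \ldots, L_{m'}$. Independence here follows from the i.i.d.\ attribute assumption together with independent tie-breaking permutations inside each sorted list, so the probability splits as $\prod_{j=1}^{m'} P(t \text{ unseen in } L_j)$. For one list $L_j$, I would condition on the value $t[A_j]$: with probability $1-p_j$ the tuple is a $0$-valued entry sitting in the tail of $L_j$, and with probability $p_j$ it is a $1$-valued entry sitting in the head. This conditioning produces the two additive pieces whose coefficients are $(1-p_j)$ and $p_j$ in the stated formula, where $P_{L_j}(k)$ encodes the distribution of the number of $1$-valued tuples in $L_j$ (itself a binomial random variable under the i.i.d.\ model).

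Each conditional probability then reduces to an elementary counting step. For $t[A_j]=0$ with $k$ ones in $L_j$: if $k\geq i$ the first $i$ sorted accesses have stayed inside the ones block and $t$ is certainly unseen, which contributes the tail sum $\sum_{k=i}^n P_{L_j}(k)$; if $k<i$ the next $i-k$ accesses sample $i-k$ zeros uniformly from the $n-k$ available, so $t$ is unseen with probability $(n-i)/(n-k)$, reproducing the first bracketed sum. For $t[A_j]=1$, a symmetric argument localises the unseen event to the case $k>i$, yielding the $\sum_{k=i+1}^n P_{L_j}(k)$ term. Multiplying the resulting per-list expression over $j$ and subtracting from $1$ gives Equation~\ref{eq:seen}.

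The main obstacle I anticipate is justifying the independence across lists in the factorisation step, since the fixed tuple $t$ appears in every $L_j$ and a casual application of independence conflates the randomness of $t$'s own attributes with the randomness of its positions. I would handle this by first conditioning on the attribute vector of $t$ and then noting that, by the i.i.d.\ assumption, the remaining randomness of each list (the attribute values of the other $n-1$ tuples plus the tie-breaking permutation of $L_j$) is generated independently across $j$; consequently, once $t[A_j]$ is revealed the unseen events in distinct lists decouple and the product form is legitimate.
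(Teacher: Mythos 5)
Your proof is correct and follows essentially the same route as the paper's own argument: take the complement of the per-list unseen events, condition on $t[A_j]$ and on the binomial count $k$ of ones in $L_j$, and compute the positional survival probabilities $(n-i)/(n-k)$ for the zeros block and (symmetrically) $(k-i)/k$ for the ones block. The only point worth flagging is that your ``symmetric argument'' for the $t[A_j]=1$ case actually yields the extra factor $(k-i)/k$ on the last sum --- the paper's appendix proof includes it, and its absence from the lemma statement as printed appears to be a typo rather than something your argument should reproduce.
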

Refer to Appendix~\ref{sec:appendixProof} for the proof.

\begin{theorem}\label{thm:expectedCostTA-SKY}
Given a subspace skyline query $\mathcal{Q}$, the expected number of sorted accesses performed by TA-SKY on an $n$ tuple boolean relation with probability of having value $1$ on attribute $A_j$ being $p_j$ is,
\begin{align}
m^\prime \sum_{i=1}^n i\times P_{stop}(i)
\end{align}
where $P_{stop}(i)$ is computed using Equations~\ref{eq:stopi-1},~\ref{eq:stopi-2}, and~\ref{eq:stopi-3}.
\begin{align}\label{eq:stopi-1}
P_{stop}(i) &= \sum_{k=1}^m P_0(i, k) \times {m^\prime \choose k} \times (1 - (1 - P_{stop}(t, \mathcal{Q}_k))^{i^\prime})
\end{align}
\begin{align}\label{eq:stopi-2}
P_0(i, k) = {m^\prime \choose k} \prod_{A_j \in \mathcal{Q}_k} (1 - p_j)^{n-i} \prod_{A_j \in \mathcal{Q} \setminus \mathcal{Q}_k} \big(1 - (1 - p_j)^{n-i}\big)
\end{align}
\begin{align}\label{eq:stopi-3}
P_{stop}(t, \mathcal{Q}_k) &= \underset{\forall A_j \in \mathcal{Q}\backslash \mathcal{Q}_k} {\Pi} p_j (1-\underset{\forall A_j \in \mathcal{Q}_k} {\Pi} (1 - p_j))
\end{align}
\end{theorem}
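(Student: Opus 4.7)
The plan is to reduce the theorem to analyzing a single stopping probability, then break that probability into independent factors using the iid Bernoulli assumption on the underlying data. First, note that each iteration of TA-SKY performs exactly $m'$ parallel sorted accesses (one per list in $\mathcal{L_Q}$), so the total number of sorted accesses equals $m'$ times the iteration at which the algorithm halts. Hence the expected number of sorted accesses is $m' \cdot \sum_{i=1}^{n} i \cdot P_{stop}(i)$, where $P_{stop}(i)$ is the probability that TA-SKY first halts at iteration $i$. The remaining work is to derive the claimed expression for $P_{stop}(i)$.

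Next, I would condition on the profile of the synthetic threshold tuple $t_{syn}$. Let $\mathcal{Q}_k$ denote the (random) set of attributes on which $t_{syn}$ currently takes value $0$ at iteration $i$, and note that on the remaining $m'-k$ attributes it necessarily takes value $1$. Under the iid model, the event $cv_{ij}=0$ occurs exactly when every yet-unseen tuple takes value $0$ on $A_j$, which has probability $(1-p_j)^{n-i}$ and is independent across attributes. Multiplying these factors for $A_j \in \mathcal{Q}_k$ with the complementary factors $1-(1-p_j)^{n-i}$ for $A_j \in \mathcal{Q}\setminus\mathcal{Q}_k$, and summing over the $\binom{m'}{k}$ choices of a $k$-element subset, reproduces Equation~\ref{eq:stopi-2} for $P_0(i,k)$.

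Then, given $\mathcal{Q}_k$, I would compute the probability that a single random tuple $t$ dominates $t_{syn}$. A dominator must (i) match $t_{syn}$ on every attribute in $\mathcal{Q}\setminus\mathcal{Q}_k$ where $t_{syn}=1$, contributing the factor $\prod_{A_j \in \mathcal{Q}\setminus\mathcal{Q}_k} p_j$, and (ii) strictly improve on at least one attribute in $\mathcal{Q}_k$, contributing $1 - \prod_{A_j \in \mathcal{Q}_k}(1-p_j)$. By attribute-independence the product yields Equation~\ref{eq:stopi-3}. Invoking Lemma~\ref{lemma:expectedDiscovery}, the number of candidate tuples seen by iteration $i$ is $i' = n\cdot P_{seen}(t,i)$; treating these as approximately independent samples, the probability that none of them dominates $t_{syn}$ conditional on $\mathcal{Q}_k$ is $(1 - P_{stop}(t,\mathcal{Q}_k))^{i'}$, and its complement, combined via the law of total probability over $k$, produces Equation~\ref{eq:stopi-1}.

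The hard part, and the step requiring the most care, will be justifying the independence used when raising $(1-P_{stop}(t,\mathcal{Q}_k))$ to the power $i'$. The $i'$ tuples actually discovered by iteration $i$ are biased toward high values on the scanned attributes, so strictly speaking they are not drawn from the iid product measure that $P_{stop}(t,\mathcal{Q}_k)$ assumes. The natural resolution, consistent with the expected-cost style of analysis already used for Lemma~\ref{lemma:expectedDiscovery}, is to adopt the iid approximation for the candidate tuples (arguing by symmetry over which tuple falls into which list position) and to avoid double-counting tuples that surface on multiple lists. A secondary notational subtlety is the apparent double appearance of $\binom{m'}{k}$ in Equations~\ref{eq:stopi-1} and~\ref{eq:stopi-2}, which should be resolved by making precise whether $\mathcal{Q}_k$ denotes a specific chosen subset of size $k$ or a canonical representative of the equivalence class, so that the counting factors line up without over- or under-counting.
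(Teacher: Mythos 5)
Your proposal follows essentially the same route as the paper's own proof: decompose the expected cost as $m'\sum_i i\,P_{stop}(i)$, condition on the set $\mathcal{Q}_k$ of lists whose current value has dropped to $0$ (giving Equation~\ref{eq:stopi-2} via independence of $(1-p_j)^{n-i}$ across attributes), compute the single-tuple domination probability of Equation~\ref{eq:stopi-3}, and raise its complement to the power $i'=E_{seen}[i]$ from Lemma~\ref{lemma:expectedDiscovery}. The two caveats you flag — the unjustified iid treatment of the $i'$ discovered tuples and the duplicated $\binom{m'}{k}$ factor across Equations~\ref{eq:stopi-1} and~\ref{eq:stopi-2} — are real, and the paper's proof silently makes the same approximations without resolving either.
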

The proof is available in Appendix~\ref{sec:appendixProof}

\section{Related Work}\label{sec:relWork}

%%%%%%%%%%%%%%%%%%%%%%%%%%%%%%%%%%%%%%%%%%%%%%%%%%%%%%%%%%%%%%%%%%%%%%%%%%%%%%%%
%\begin{table}[!tb]
%\centering
%\caption{Taxonomy of Skyline Algorithms}\label{tab:taxonomySkylineAlgorithms}
%\begin{tabular}{p{1.4cm}|p{1.6cm}|p{1.8cm}|p{1.8cm}}
%    
%    \hline 
%    \multicolumn{3}{c|}{Fixed Attribute Skyline} & Subspace Skyline\\
%    \hline
%    
%    Sorting-based & \multicolumn{2}{c|}{Partition-based} & \multirow{3}{1.8cm}{Skycube~\cite{yuan2005efficient}, Skyey~\cite{pei2005catching}, Subsky~\cite{tao2006subsky}, CSC~\cite{xia2012online}, FMC~\cite{maabout2016skycube}} \\ \cline{1-3}
%    
%    \multirow{2}{1.4cm}{BNL~\cite{borzsony2001skyline}, SFS~\cite{chomicki2003skyline}, LESS~\cite{godfrey2005maximal}, SaLSa~\cite{bartolini2008efficient}} & Index & No Index & \\ \cline{2-3}
%    
%     & NN~\cite{kossmann2002shooting}, BBS~\cite{papadias2003optimal}, ZSearch~\cite{lee2007approaching} & D\&C~\cite{borzsony2001skyline}, OSPS~\cite{zhang2009scalable}, BSkyTree~\cite{lee2014scalable}, LS~\cite{morse2007efficient} & \\
%    \hline
%\end{tabular}
%\end{table}
%%%%%%%%%%%%%%%%%%%%%%%%%%%%%%%%%%%%%%%%%%%%%%%%%%%%%%%%%%%%%%%%%%%%%%%%%%%%%%%%

In the database context, the skyline operator was first introduced in \cite{borzsony2001skyline}. Since then much work aims to improve the performance of skyline computation in different scenarios. In this paper, we consider skyline algorithms designed for centralized database systems. 

To the best of our knowledge, LS~\cite{morse2007efficient} and Hexagon~\cite{preisinger2007hexagon} are the only two algorithms designed to compute skylines over categorical attributes. Both algorithms operate by first creating the complete lattice of possible attribute-value combinations. Using the lattice structure, non-skyline tuples are then discarded. Even though LS and Hexagon can discover the skylines in linear time, the requirement to construct the entire lattice for each skyline is strict and not scalable. The size of the lattice is exponential in the number of attributes in a skyline query. Moreover, in order to discover the skylines, the algorithms have to scan the entire dataset twice, which is not ideal for online applications.

Most of the existing work on skyline computation concerns relations with {\em numeric attributes}. Broadly speaking, skyline algorithms for numerical attributes can be categorized as follows. {\em Sorting-based Algorithms} utilize sorting to improve the performance of skyline computation aiming to discard nonskyline objects using a small number of dominance checks~\cite{chomicki2005skyline}~\cite{godfrey2005maximal}. For any subspace skyline query, such approaches will require sorting the dataset. SaLSa~\cite{bartolini2008efficient} is the best in this category and we demonstrated how our adaptation on categorical domains, namely ST-S outperforms SaLSa. 

{\em Partition-based Algorithms} recursively partition the dataset into a set of disjoint regions, compute local skylines for each region and merge the results 
\cite{borzsony2001skyline}~\cite{zhang2009scalable}. Among these, BSkyTree~\cite{lee2014scalable} has been shown to be the best performer. We
demonstrated that our adaptation of this algorithm, namely ST-P, for categorical domains outperforms the vanilla BSkyTree when applied to our application scenario. Other partitioning algorithms, such as  NN~\cite{kossmann2002shooting}, BBS~\cite{papadias2003optimal} 
and ZSearch~\cite{lee2007approaching} utilize indexing structures such as R-tree, ZB-tree for efficient region level dominance tests. However,
adaptations of such algorithms in the subspace skyline problem would incur exponential space overhead which is not in line with the scope
of our work (at most linear to the number of attributes overhead). 

A body of work is also devoted to {\em Subspace Skyline Algorithms}~\cite{yuan2005efficient, pei2005catching} which utilize pre-computation to
compute skylines for each subspace skyline query. These algorithms impose exponential space overhead, however. Further improvements
to reduce the overhead ~\cite{tao2006subsky}~\cite{xia2006refreshing}~\cite{xia2012online}~\cite{maabout2016skycube} are highly data
dependent and offer no guarantees for their storage requirements.

\begin{figure*}[!ht]
  \begin{minipage}[t]{0.23\linewidth}
    \centering
    \includegraphics[scale=0.46]{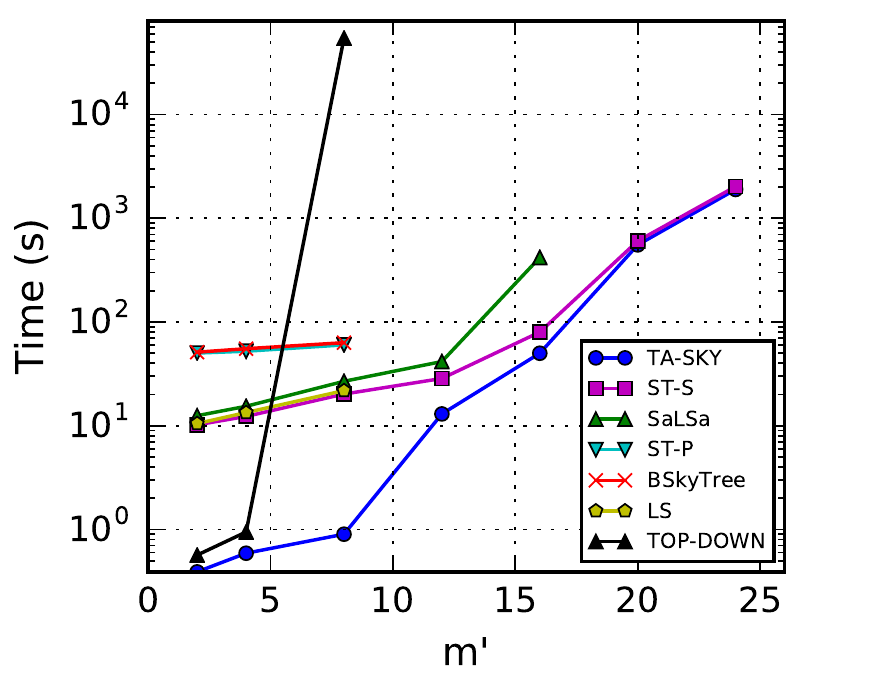}
    \vspace{-6mm}
    \caption{Varying query size}
    \label{fig:algorithms}
  \end{minipage}
  \hspace{1mm}
  \begin{minipage}[t]{0.23\linewidth}
    \centering
    \includegraphics[scale=0.46]{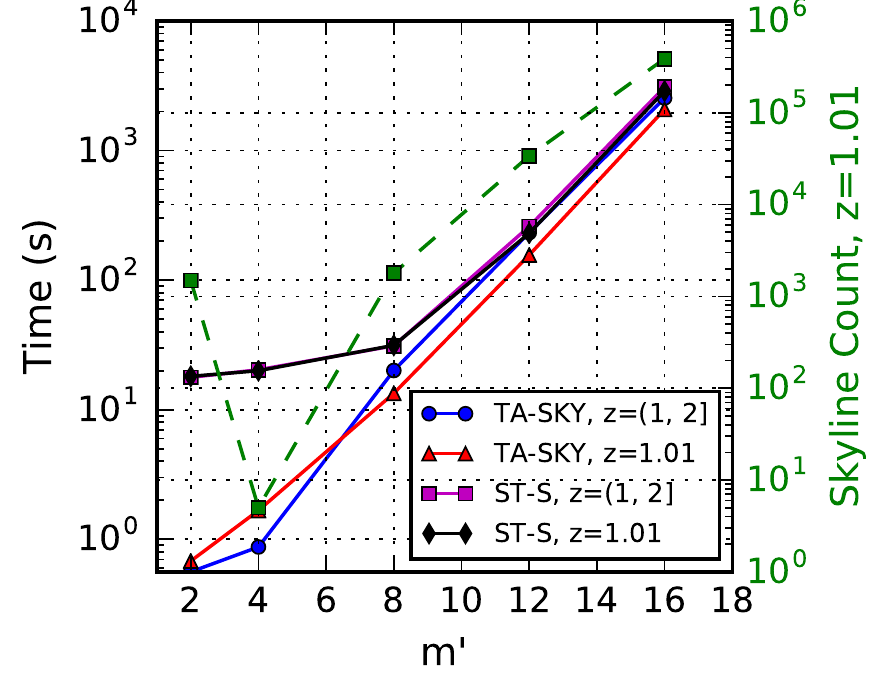}
    \vspace{-6mm}
    \caption{Varying query size}
    \label{fig:syn_TimeVsDimension_Dist}
  \end{minipage}
  \hspace{3mm}
  \begin{minipage}[t]{0.23\linewidth}
    \centering
    \includegraphics[scale=0.46]{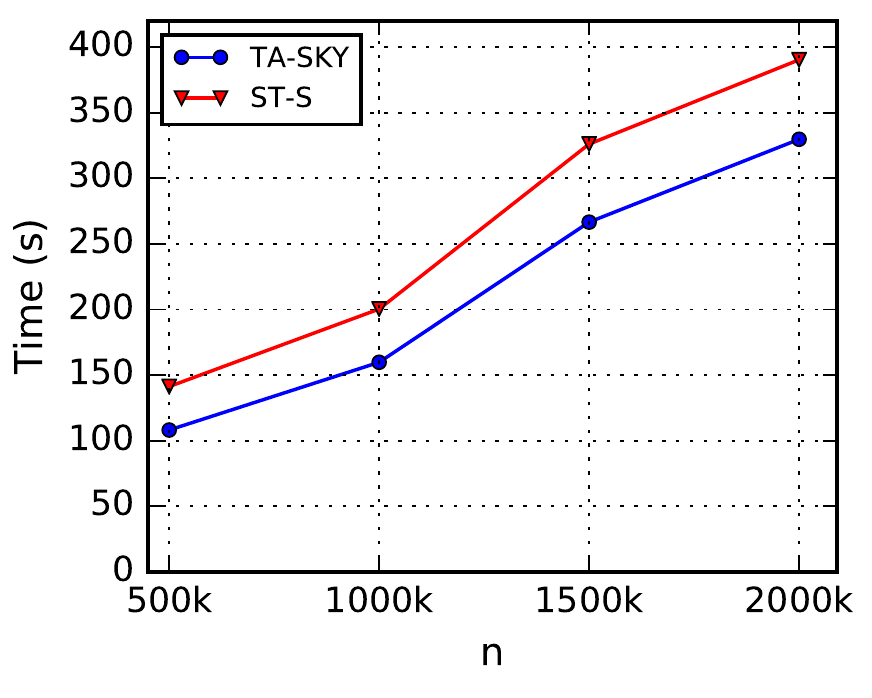}
    \vspace{-6mm}
    \caption{Varying number of tuples} 
    \label{fig:syn_TimeVsN}
  \end{minipage}
  \begin{minipage}[t]{0.25\linewidth}
    \centering
    \includegraphics[scale=0.46]{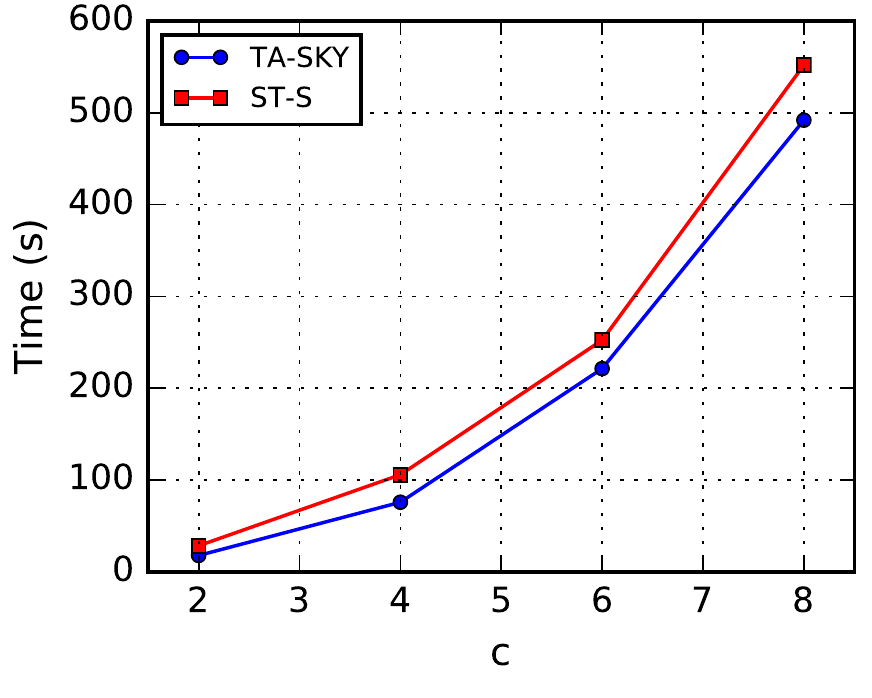}
    \vspace{-2mm}
    \caption{Varying cardinality}
    \label{fig:syn_TimeVsC}
  \end{minipage}
\end{figure*}

\begin{figure*}[!ht]
  \begin{minipage}[t]{0.25\linewidth}
    \centering
    \includegraphics[scale=0.46]{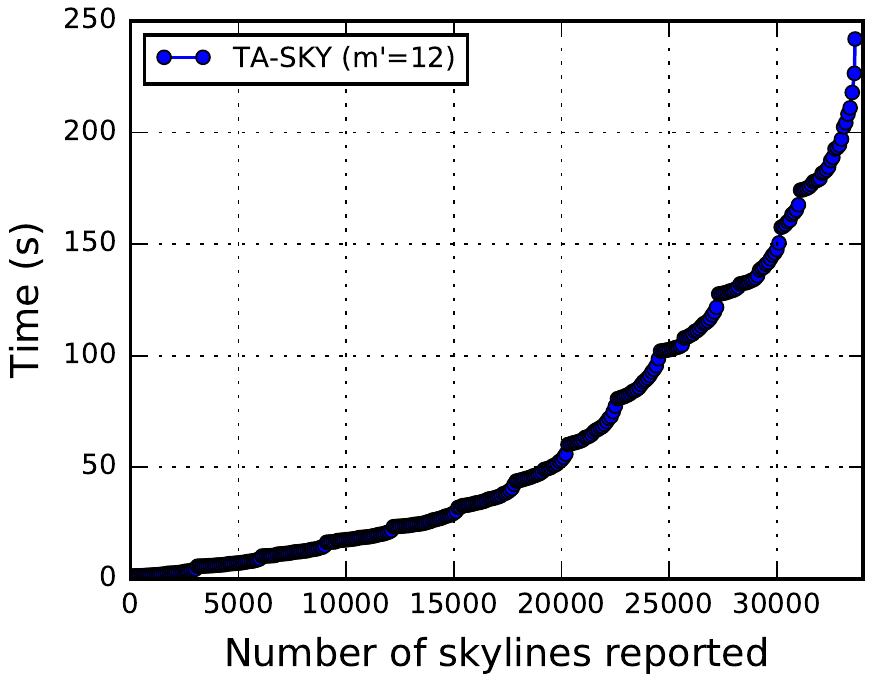}
    %\vspace{-2mm}
    \caption{Time vs number of skylines returned}
    \label{fig:syn_TimeVsNumberOfSkylines}
  \end{minipage}
  \hspace{2mm}
  \begin{minipage}[t]{0.25\linewidth}
    \centering
    \includegraphics[scale=0.46]{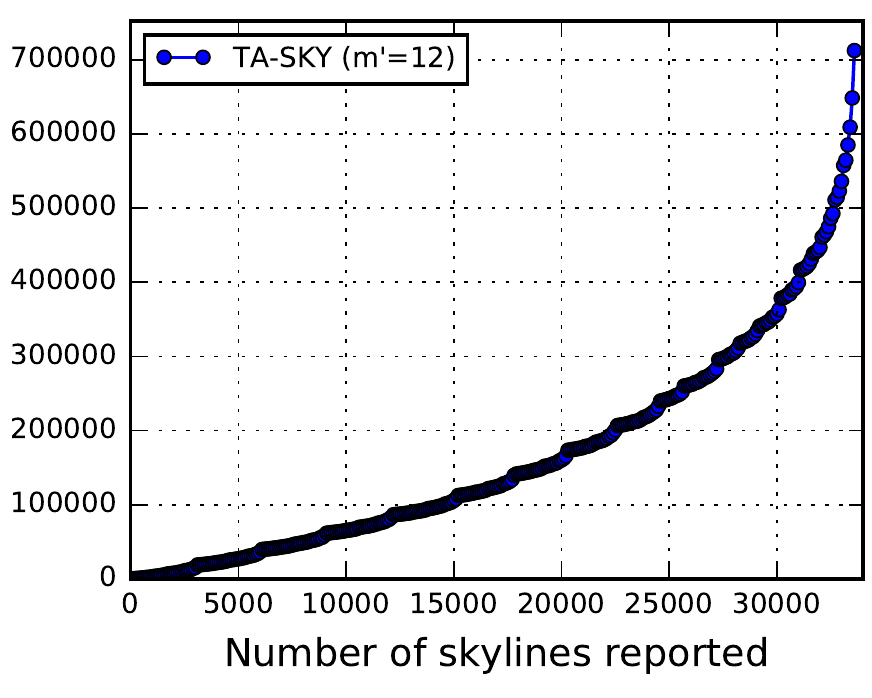}
    %\vspace{-2mm}
    \caption{Tuples accessed vs number of skylines returned}
    \label{fig:syn_NumberOfTuplesVsSkylines}
  \end{minipage}
  \begin{minipage}[t]{0.21\linewidth}
    \centering
    \includegraphics[scale=0.5]{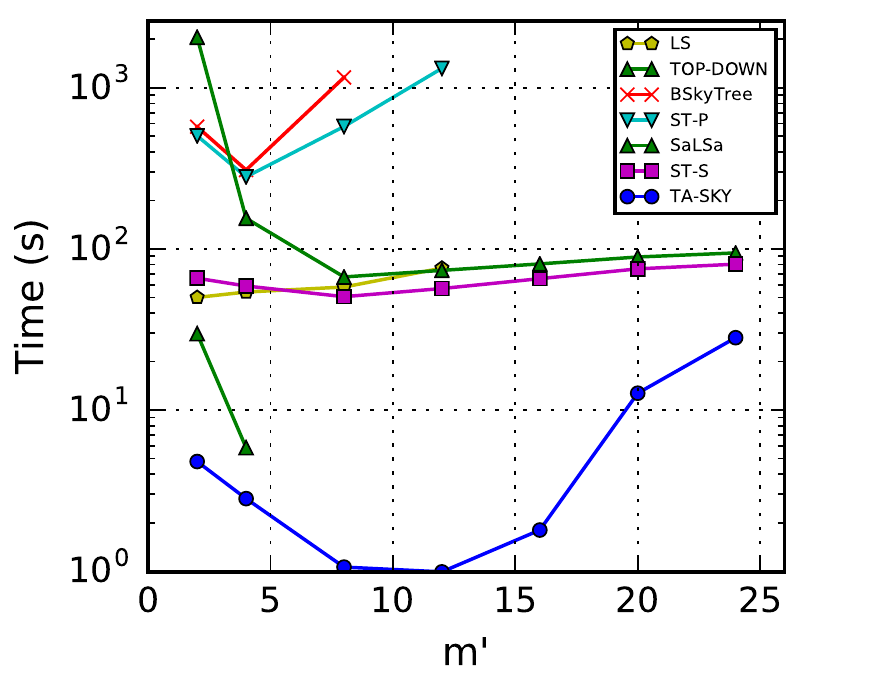}
    %\vspace{-6mm}
    \caption{AirBnB: Varying query size}
    \label{fig:algorithmsAirbnb}
  \end{minipage}
  \hspace{1mm}
  \begin{minipage}[t]{0.23\linewidth}
    \centering
    \includegraphics[scale=0.46]{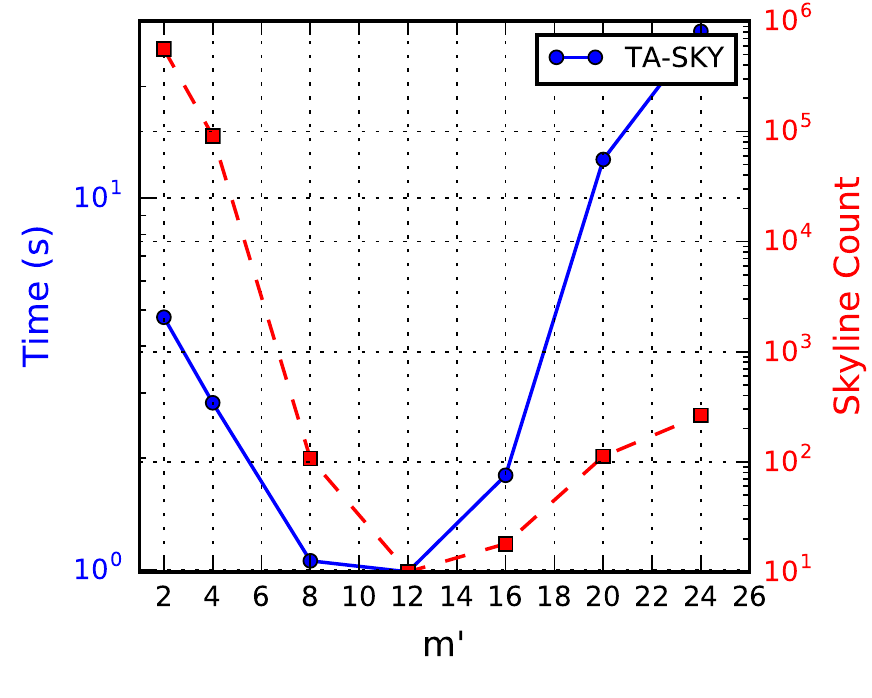}
    %\vspace{-6mm}
    \caption{AirBnB: TA-SKY performance v.s. Skyline size}
    \label{fig:Airbnbm}
  \end{minipage}
\end{figure*}

\section{Experimental Evaluation}\label{sec:experiments}
\subsection{Experimental Setup}

In this section, we describe our experimental results. In addition to the theoretical analysis presented in \S\ref{sec:3} and \S\ref{sec:subsky}, we compared our algorithms experimentally against existing state-of-the-art algorithms.  Our experiments were run over synthetic data, as well as two real-world datasets collected from \emph{AirBnB}\footnote{\small{http://www.airbnb.com/}} and \emph{Zillow}\footnote{\small{http://zillow.com/}}.
%The synthetic data was used to evaluate the effectiveness of the proposed methods over varying characteristics of the dataset. 

\vspace{1mm}
\noindent {\bf Synthetic Datasets:} 
In order to study the performance of the proposed algorithms in different scenarios, we generated a number of {\bf Zipfian datasets}, each containing 2M tuples and 30 attributes. Specifically, we created datasets with attribute cardinality ranging from  $2 - 8$. In this environment, the frequency of an attribute value is inversely proportional to its rank.  Therefore, the number of tuples having a higher (i.e., better) attribute value is less than then number of tuples with a comparatively lower attribute value. We used a Python package for generating these datasets. For each attribute, we specify its distribution over the corresponding domain by controlling the $z$ value. Two attributes having the same cardinality but different $z$ values will have different distributions. Specifically, the attribute with lower $z$ value will have a higher number of tuples having higher attribute value. Unless otherwise specified, we set the $z$ values of the attributes evenly distributed in the range $(1, 2]$ for generating synthetic datasets.

\vspace{1mm}
{\bf Choice of dataset:} we used Zipfian datasets as they reflect more precisely situation with real categorical datasets. Specifically, in real-world applications, for a specific attribute, the number of objects having higher attribute values (i.e., better) is likely to be less than the number of objects with lower attribute values. For example, in AirBnB, \emph{3 bed room} hosts are less frequent than hosts having a \emph{single bed room}. Similarly, in Craigslist, \emph{sedans} are more prevalent than \emph{sports cars}. Moreover, in real-world applications, the distributions of attributes are different from one another. For example, in our AirBnB dataset, approximately 600k out of the 2M hosts have amenity \emph{Cable TV}. Whereas, the approximate number of hosts with amenity \emph{Hot Tub} is only 200k.

\vspace{1mm}
\noindent {\bf AirBnB Dataset:} Probably one of the best fits for the application of this paper is AirBnB. It is a peer-to-peer location-based marketplace in which people can rent their properties or look for an abode for a temporary stay. We collected the information of approximately 2 million \emph{real} properties around the globe, shared on this website. AirBnB has a total number of 41 attributes for each host that captures the features and amenities provided by the hosts. Among all the attributes, 36 of them are boolean (categorical with domain size 2) attributes, such as \emph{Breakfast}, \emph{Cable TV}, \emph{Gym}, and \emph{Internet}, while 5 are categorical attributes, such as \emph{Number of Bedrooms}, and \emph{Number of Beds} etc. We tested our proposed algorithms against this dataset to see their performance on real-world applications.

\vspace{1mm}
\noindent {\bf Zillow Dataset:}
Zillow is a popular online real estate website that helps users to find houses and apartments for sale/rent. We crawled approximately 240k houses listed for sale in Texas and Florida state. For each listing, we collected 9 attributes that are present in all the houses. Out of 9 attributes, 7 of them are categorical, such as \emph{House Type}, \emph{Number of Beds}, \emph{Number of Baths}, \emph{Parking Space} etc., and two are numeric - \emph{House size} (in sqft), and \emph{Price}. The domain cardinalities of the categorical attributes varies from 3 to 30. Using discretization we mapped the numeric attributes into the categorical domain, each of cardinality 20.

\vspace{1mm}
\noindent {\bf Algorithms Evaluated:}
We tested the proposed algorithms, namely ST-S, ST-P, TOP-DOWN, and TA-SKY as well as the state-of-art algorithms LS~\cite{morse2007efficient}, SaLSa~\cite{bartolini2008efficient} and BSkyTree \cite{lee2014scalable} that are applicable to our problem settings.

\vspace{1mm}
\noindent {\bf Performance Measures:} We consider running time as the main performance measure of the algorithms proposed in this paper. In addition, we also investigate the key features of ST-S, ST-P and TA-SKY algorithm and demonstrate how they behave under a variety of settings. Each data point is obtained as the average of 25 runs.

\vspace{1mm}
\noindent{\bf Hardware and Platform:} All our experiments were performed on a quad-core 3.5 GHz Intel i7 machine running Ubuntu 14.04 with 16 GB of RAM. The algorithms were implemented in Python.

\subsection{Experiments over Synthetic Datasets}

\vspace{1mm}
\noindent{\bf Effect of Query Size $m^\prime$} \textbf{:} We start by comparing the performance of our algorithms with existing state-of-art algorithms that exhibit the best performance in their respective domain. Note that, unlike TA-SKY, the rest of the algorithms do not leverage any indexing structure. The goal of this experiment is to demonstrate how utilizing a small amount of precomputation (compared to the inordinate amount of space required by Skycube algorithms) can improve the performance of subspace skyline computation. Moreover, the precomputation cost is independent of the skyline query. This is because we only need to build the sorted lists once at the beginning. For this experiment, we set $n=500$k and vary $m'$ between $6-24$. In order to match real-world scenarios, we selected attributes with cardinality $c$ ranging between $2-6$. Specifically, $50\%$ of the selected attributes have cardinality $2$, $30\%$ have cardinality $4$, and $20\%$ have cardinality $6$. Figure~\ref{fig:algorithms} shows the experiment result. We can see that when $m^\prime$ is small, TA-SKY outperforms other algorithms. This is because, with small query size, TA-SKY can discover all the skylines by accessing only a small portion of the tuples in the dataset. However as $m'$ increases, the likelihood of a tuple dominating another tuple decreases. Hence, the total number of tuples accessed by TA-SKY before the stopping condition is satisfied also increases. Hence, the performance gap between TA-SKY and ST-S starts to decrease. Both ST-S and ST-P exhibits better performance compared to their baseline algorithms (SaLSa and BSkyTree). Algorithms such as ST-P, BSkyTree, and LS do not scale for larger values of $m'$. This is because all these algorithms operate by constructing a lattice over the query space which grows exponentially. Moreover, even though TOP-DOWN initially performed well, it did not not complete successfully for $m^\prime > 4$.

%Figure~\ref{fig:syn_TimeVsDimension_TopDownVsTA-SKY} shows the impact of $m^\prime$ on TA-SKY and TOP-DOWN, two index based algorithms proposed in \S\ref{sec:subsky}. For this experiment, we use the dataset with $n = 1$M, $m = 40$ and $c$ ranging between $2-6$. We then issue the subspace skyline queries by randomly selecting $m^\prime$ attributes ($2 \leq m^\prime \leq 16$). For both algorithms, the performance decreased as we increase the query size. However, the execution time of TOP-DOWN increases at an exponential rate. TOP-DOWN did not not complete successfully for $m^\prime > 4$. For TA-SKY, the growth is still polynomial over $n$ as shown in \S\ref{sec:TASKY-performance}. In the following experiments, we shall only consider TOP-DOWN and ST-S as they exhibit better performance compared to the other algorithms.

Figure~\ref{fig:syn_TimeVsDimension_Dist} demonstrates the effect $m^\prime$ and $z$ on the performance of TA-SKY and ST-S. For this experiment, we created two datasets with cardinality $c = 6$ and different $z$ values. In the first dataset, all the attributes have same $z$ value (i.e., $z=1.01$), whereas, for the second dataset, $z$ values of the attributes are evenly distributed within the range $(1, 2]$. By setting $z = 1.01$ for all attributes, we increase the frequency of tuples having preferable (i.e., higher) attribute values. Hence, the skyline size of the first dataset is less than the skyline size of the second dataset. This is because tuples with preferable attribute values are likely to dominate more non-skyline tuples, resulting in a small skyline size. Moreover, this also increases the likelihood of the stopping condition being satisfied at an early stage of the iteration. Hence, TA-SKY needs less time for the dataset with $z = 1.01$. In summary, TA-SKY performs better on datasets where more tuples have preferable attribute values.
The right-y-axis of Figure~\ref{fig:syn_TimeVsDimension_Dist} shows the skyline size for each query length. One can see that as the query size increased, the chance of tuples dominating each other decreased, which resulted in a significant increase in the skyline size. Please note that the increases in the execution time of TA-SKY are due to the increase in the skyline size which is bounded by $n$. Moreover, as $m^\prime$ increases, there is an initial decrease in skyline size. This is because when $m^\prime$ is small (i.e., 2), the likelihood of a tuple having highest value (i.e., preferable) on all attribute is large.

\vspace{1mm}
\noindent{\bf Effect of Dataset Size ($n$):} Figure~\ref{fig:syn_TimeVsN} shows the impact of $n$ on the performance of TA-SKY and ST-S. For this experiment, we used dataset with cardinality $c=6$, $m^\prime=12$ and varied $n$ from $500$K to $2$M. As we increase the value of $n$, the number of skyline tuples increases. With the increase of skyline size, both TA-SKY and ST-S needs to process more tuple before satisfying the stop condition. Therefore, total execution time increases with the increase of $n$.
%Moreover, in accordance with our previous discussion, the execution time of TA-SKY is less for the dataset having attributes distributed according to $z=1.01$.

\vspace{1mm}
\noindent{\bf Effect of Attribute Cardinality ($c$):} In our next experiment, we investigate how changing attribute cardinality affects the execution time of TA-SKY and ST-S. We set the dataset size to $n=1$M while setting the query size to $m^\prime = 12$, and vary the attribute cardinality $c$ from $4$ to $8$. Figure~\ref{fig:syn_TimeVsC} shows the experiment result. Increasing the cardinality of the attributes increases the total number of skyline tuples. Therefore, effects the total execution time of TA-SKY and ST-S.

\begin{figure*}[!ht]
  \vspace{-6mm}
  \begin{minipage}[t]{0.23\linewidth}
    \centering
    \includegraphics[scale=0.46]{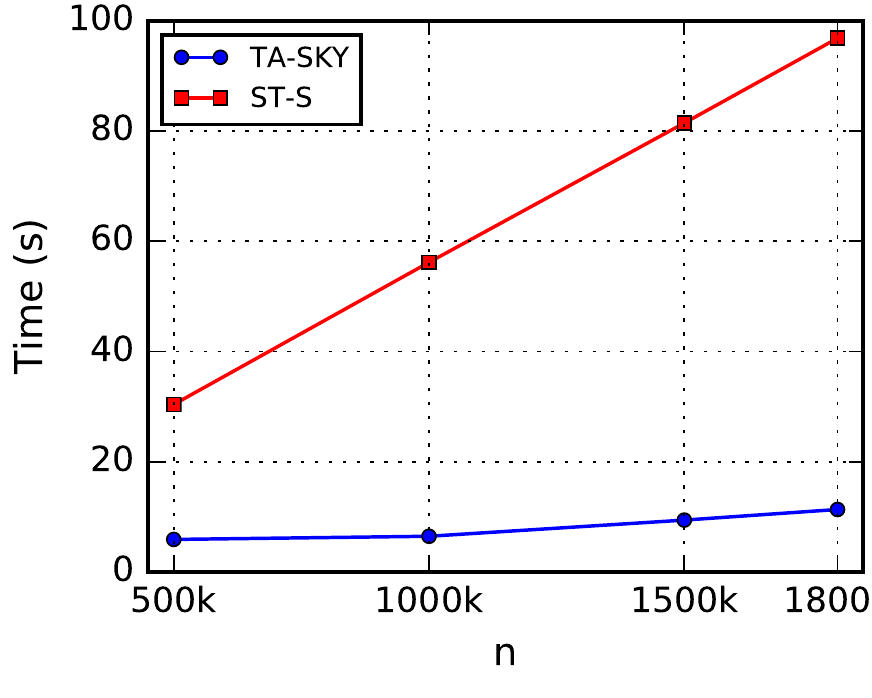}
    %\vspace{-2mm}
    \caption{AirBnB: Varying the number of tuples}
    \label{fig:Airbnbn}    
  \end{minipage}
  \begin{minipage}[t]{0.23\linewidth}
    \centering
    \includegraphics[scale=0.45]{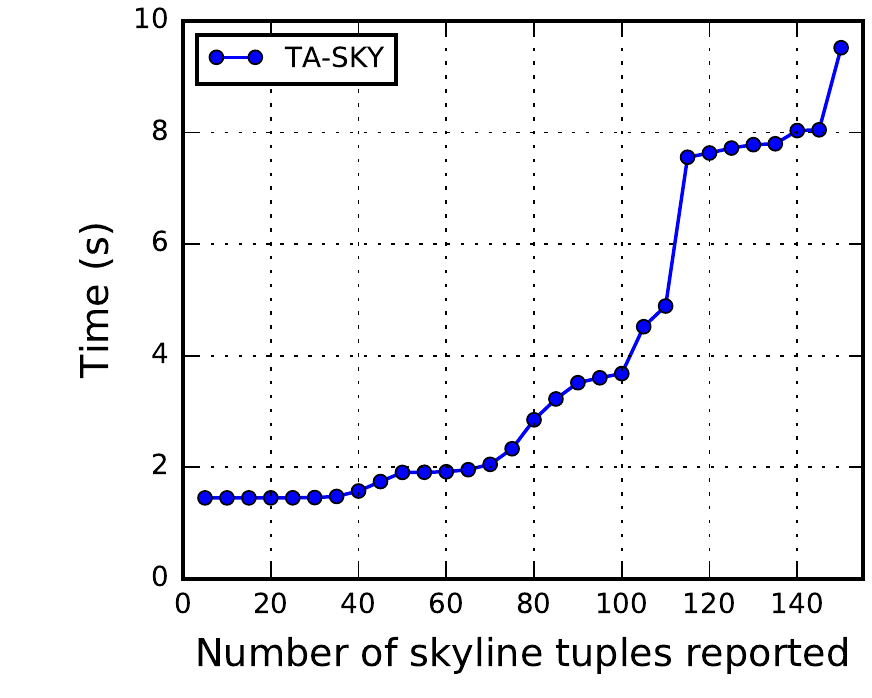}
    %\vspace{-2mm}
    \caption{AirBnB: Time vs the number of skylines returned} 
    \label{fig:Airbnbp1}
  \end{minipage}
  \hspace{1mm}
  \begin{minipage}[t]{0.25\linewidth}
    \centering
    \includegraphics[scale=0.45]{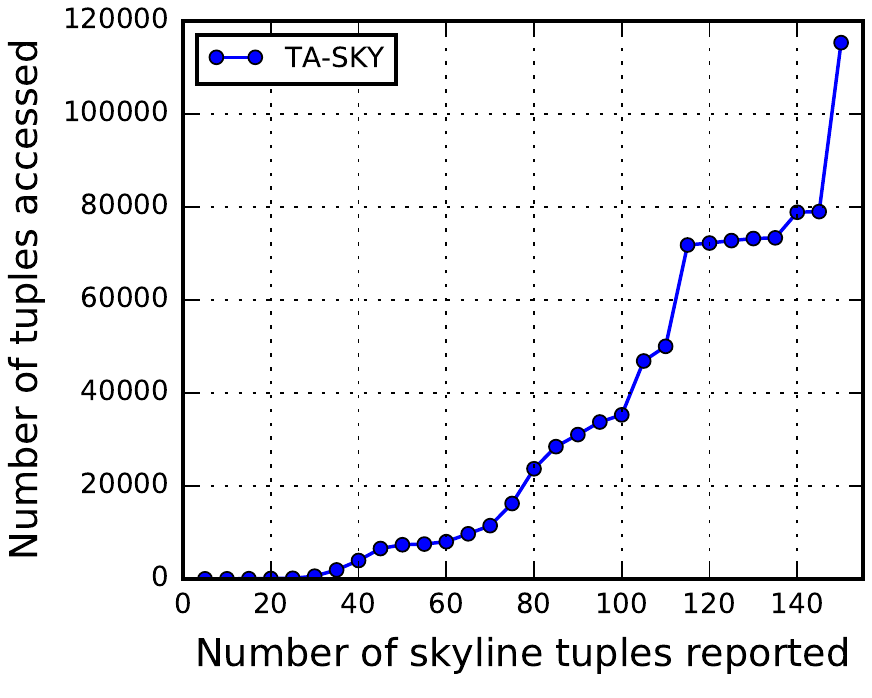}
    %\vspace{-2mm}
    \caption{AirBnB: Number of accessed tuples vs the number of skylines}
    \label{fig:Airbnbp2}
  \end{minipage}
  \begin{minipage}[t]{0.25\linewidth}
    \centering
    \includegraphics[scale=0.45]{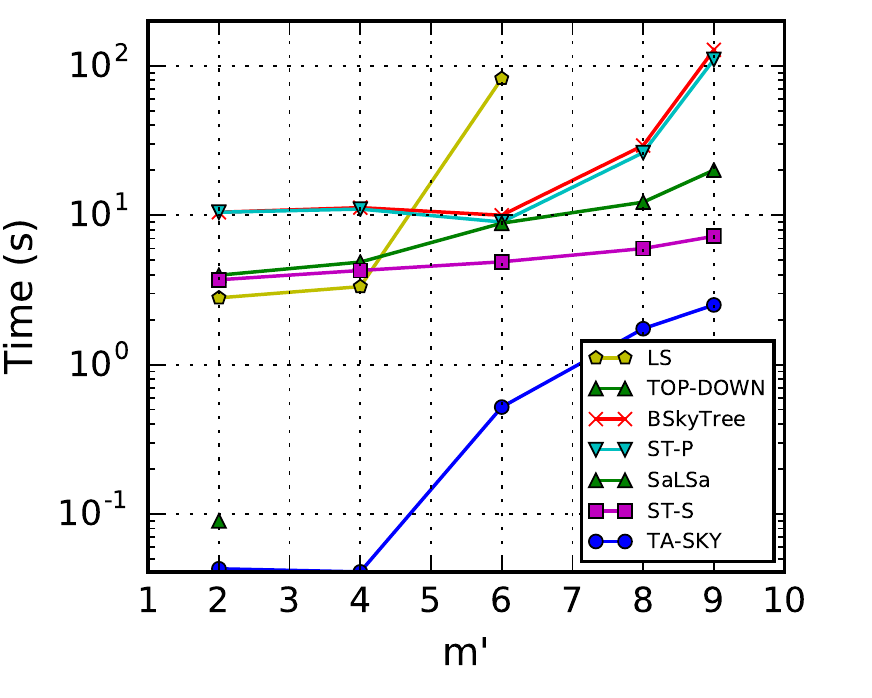}
    %\vspace{-2mm}
    \caption{Zillow: Varying query size}
    \label{fig:Zillowm}
  \end{minipage}
\end{figure*}

\begin{figure*}[!ht]
  \begin{minipage}[t]{0.3\linewidth}
    \centering
    \includegraphics[scale=0.46]{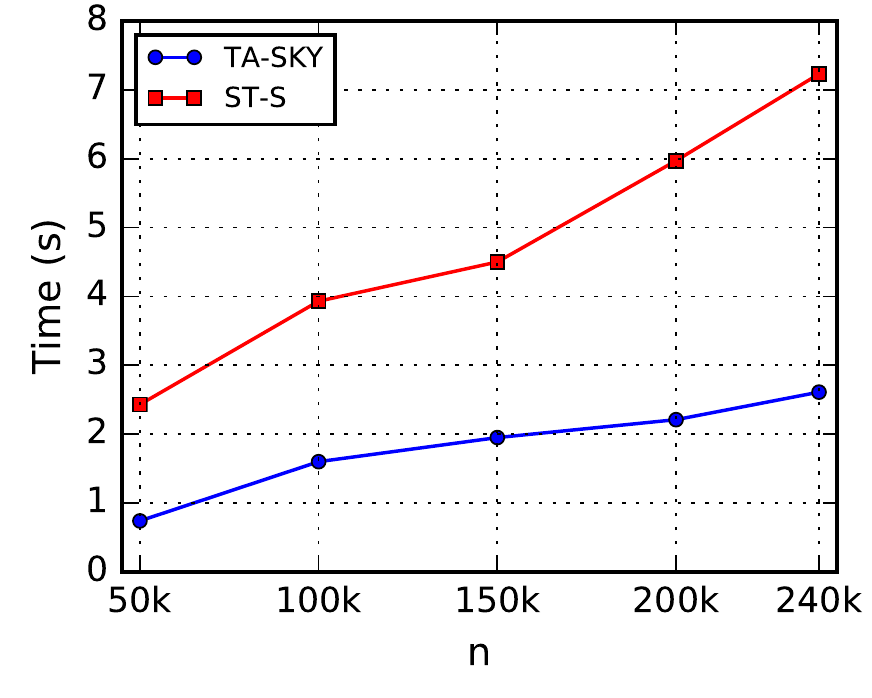}
    %\vspace{-2mm}
    \caption{Zillow: Varying the number of tuples}
    \label{fig:TimeVsNZillow}
  \end{minipage}
  \hspace{2mm}
  \begin{minipage}[t]{0.3\linewidth}
    \centering
    \includegraphics[scale=0.46]{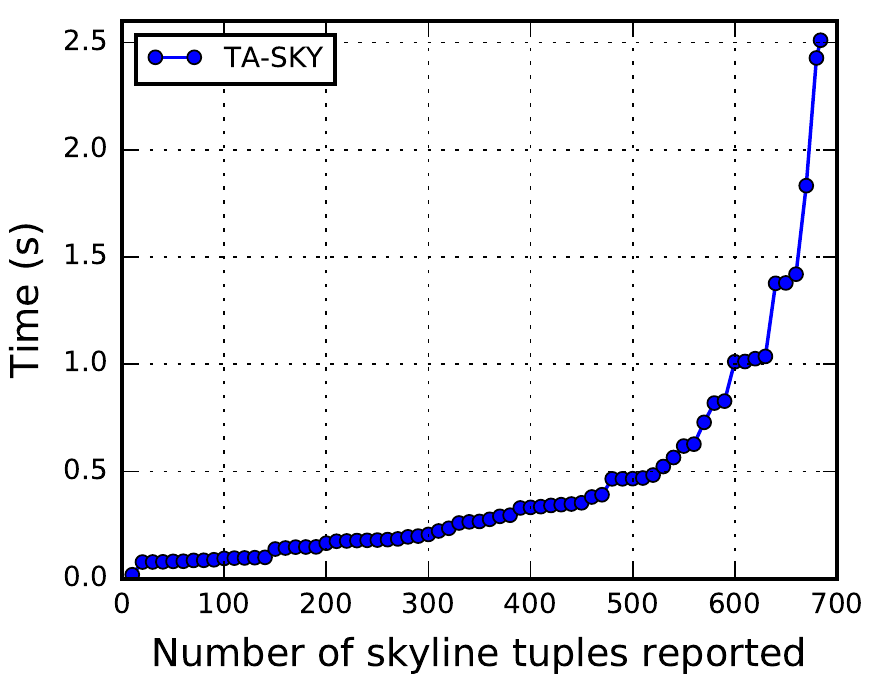}
    %\vspace{-2mm}
    \caption{Zillow: Time vs the number of skylines returned}
    \label{fig:NumberOfTuplesVsSkylines}
  \end{minipage}
  \begin{minipage}[t]{0.3\linewidth}
    \centering
    \includegraphics[scale=0.5]{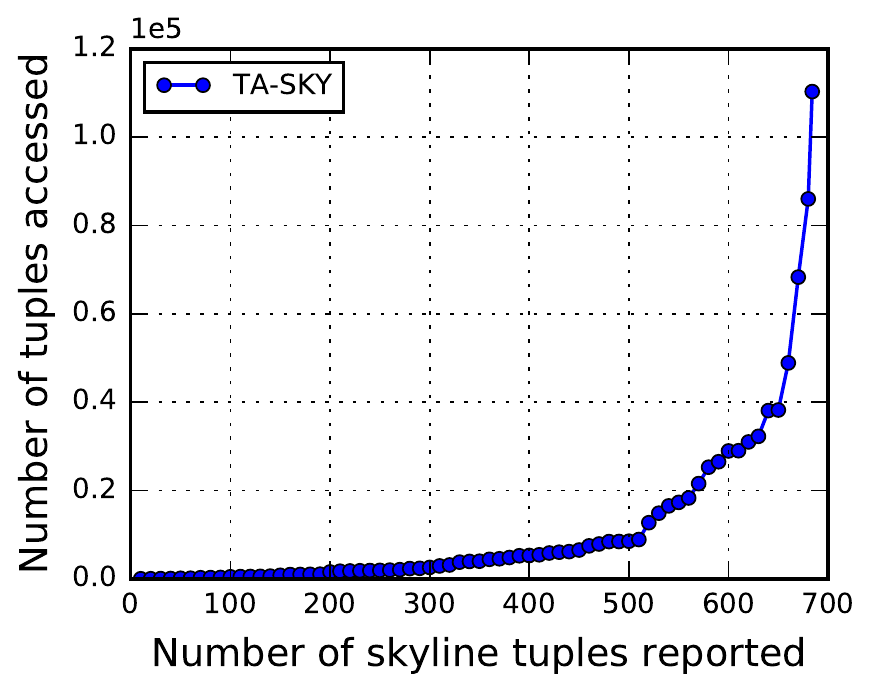}
    %\vspace{-6mm}
    \caption{Zillow: Number of accessed tuples vs the number of skylines}
    \label{fig:NumberOfTuplesVsOutputTuplesZillow}
  \end{minipage}
\end{figure*}

\vspace{1mm}
\noindent{\bf Progressive Behavior of TA-SKY:} Figure~\ref{fig:syn_TimeVsNumberOfSkylines} and~\ref{fig:syn_NumberOfTuplesVsSkylines} demonstrates the incremental performance of TA-SKY for discovering the new skylines for a specific query of size $m^\prime=12$, while $n=1$M and all the attributes having cardinality $c=12$. Figure~\ref{fig:syn_TimeVsNumberOfSkylines} shows the CPU time as a function of the skyline size returned. We can see that even though the full skyline discovery takes $250$ seconds, within the first $50$ seconds TA-SKY outputs more than $50\%$  of the skyline tuples. Figure~\ref{fig:syn_NumberOfTuplesVsSkylines} presents the number of tuples TA-SKY accessed as a function of skyline tuples discovered so far. The skyline contains more than $33$k tuples. In order to discover all the skylines, TA-SKY needs to access almost $700$K (70\%) tuples. However, we can see that more than $80$\% of the skyline tuples can be discovered by accessing less that $30$\% tuples.

\subsection{Experiments over AirBnB Dataset}\label{subsec:expAirbnb}
In this experiment, we test the performance of our final algorithm, TA-SKY, against the real Airbnb dataset. We especially study (i) the effects of varying $m'$ and $n$ on the performance of the algorithm and (ii) the progressive behavior of it.

\vspace{1mm}
\noindent{\bf Effect of Varying Query Size ($m'$):} In our first experiment on AirBnB dataset, we compared the performance of different algorithms proposed in the paper with existing works.
We varied the number of attributes in the query (i.e., $m'$) from $2$ to $24$ while setting the number of tuples to 1,800,000.
Figure~\ref{fig:algorithmsAirbnb} shows the experiment result.
Similar to our experiment on the synthetic dataset (Figure~\ref{fig:algorithms}), TA-SKY and ST-S perform better than the remaining algorithms.
Even though initially performing well, TOP-DOWN did not scale after query length $4$. This is because, with $m^\prime > 4$, the skyline hosts shift to the middle of the corresponding query lattice, requiring TOP-DOWN to query many lattice nodes.
Figure~\ref{fig:Airbnbm} shows the relation between the performance of TA-SKY and the skyline size.
Unlike the generally accepted rule of thumb that the skyline size grows exponentially as the number of attributes increases, in this experiment, we see that the skyline size originally started to decrease as the query size increased and then started to increase again after query size $12$. The reason for that is because when the query size is small and $n$ is relatively large, the chance of having many tuples with (almost) all attributes in $\mathcal{Q}$ being 1 (for Boolean attributes) is high. None of these tuples are dominated and form the skyline. However, as the query size increases, the likelihood of having a tuple in the dataset that corresponds to the top node of the lattice decreases. Hence, if the query size gets sufficiently large, we will not see any tuple corresponding to the top node. From then the skyline size will increase with the increase of query size.
%On the other hand, as expected by the theoretical analysis, TA-SKY scaled well for increasing the skyline size.% For instance, it only required around 10 seconds to discover the skyline of a query of size $20$ over almost 2M tuples.

\vspace{1mm}
\noindent{\bf Effect of Varying Dataset Size ($n$):}
In this experiment, we varied the dataset size from 500,000 to 1,800,000 tuples, while setting $m'$ to $20$.
Figure~\ref{fig:Airbnbn} shows the performance of TA-SKY and ST-S in this case. Once can see that between these two algorithms, the cost of ST-S grows faster. Moreover, even though in the worst case TA-SKY is quadratically dependent on $n$, it performs significantly better in practice. Especially in this experiment, a factor of 4 increase in the dataset size only increased the execution time by less than a factor of 3.

\vspace{1mm}
\noindent{\bf Progressive Behavior of TA-SKY:}
As explained in \S\ref{sec:TASky}, TA-SKY is a progressive algorithm, i.e., tuples that are inserted into the candidate skyline set are guaranteed to be in $\mathcal{S}_\mathcal{Q}$.
This characteristic of TA-SKY makes it suitable for real world (especially web) applications, where, rather than delaying the result until the algorithm ends, partial results can gradually be returned to the user. Moreover, we can see that TA-SKY tends to discover a large portion of the skyline quickly within a short execution time with a few number of tuple accesses (as a measure of cost in the web applications).
To study this property of the algorithm, in this experiment, we set $n=1,800,000$ and $m'=20$ and monitored the execution time, as well as the number of tuple accesses, as the new skyline tuples are discovered. 
Figures~\ref{fig:Airbnbp1} and~\ref{fig:Airbnbp2} show the experiment results for the execution time and the number of accessed tuples, respectively.
One can see in the figure that TA-SKY performed well in discovering a large number of tuples quickly. For example, (i) as shown in Figure~\ref{fig:Airbnbp1}, it discovered more than $\frac{2}{3}$ of the skylines in less that $3$ seconds, and (ii) as shown in Figure~\ref{fig:Airbnbp2}, more than half of the skylines were discovered by only accessing less than $2\%$ of the tuples ($20,000$ tuples).

\subsection{Experiments over Zillow Dataset}\label{subsec:expZillow}

We performed the similar set of experiments on Zillow dataset.

\vspace{1mm}
In our first experiment, we varied the number of attributes from 2 to 9 while the $n$ is set to 236,194. The experiment result is presented in Figure \ref{fig:Zillowm}. Similar to our previous experiments, ST-S and TA-SKY outperforms the remaining algorithms. This result also shows the effectiveness of ST-S and TA-SKY on categorical attributes with large domain size. For the next experiment, we varied the dataset size ($n$) from 50,000 to 240,000 tuples, while setting $m^\prime$ to 9. Figure \ref{fig:TimeVsNZillow} shows the performance of ST and TA-SKY for this experiment. Figure \ref{fig:NumberOfTuplesVsSkylines} and \ref{fig:NumberOfTuplesVsOutputTuplesZillow} demonstrate the progressive behavior of TA-SKY for $m^\prime = 9$ and $n= 236,194$. We can see that 90\% of skylines are discovered withing the first second and by accessing only 1\% tuples.

\section{Final Remarks}\label{sec:conclusion}
In this paper, we studied the important problem of subspace skyline discovery over datasets with categorical attributes.
We first designed a data structure for organizing tuples in candidate skyline list that supports efficient dominance check operations. We then propose two algorithms ST-S and ST-P algorithms for answering subspace skyline queries for the case where precomputed indices are absent. Finally, we considered the existence of precomputed sorted lists and developed TA-SKY,
the first threshold style algorithm for skyline discovery. 
In addition to the theoretical analysis, our comprehensive set of experiments on synthetic and real datasets confirmed the superior performance of our algorithms.

\bibliographystyle{abbrv}     % for VLDB
\bibliography{references}

\appendix

\section{Tree Data Structure Optimizations}\label{ap:tree-optimizations}

\vspace{1mm}
\noindent{\bf Early termination:} The tree structure described in \S\ref{subsec:tree}, does not store any information inside internal nodes. We can improve the performance of primitive operations (i.e., reduce the number of nodes visited) by storing some information inside each internal node. Specifically, each internal node maintains two variables \textit{minScore} and \textit{maxScore}. The \textit{minScore} (resp. \textit{maxScore}) value of an internal node is the \textit{minimum} (resp. \textit{maximum}) tuple score of all the tuples mapped in the subtree rooted at that node. The availability of such information at each internal 
node assists in skipping search inside irrelevant regions.

While searching the tree to discover tuples dominated by or dominating a specific tuple $t$, we also maintain an additional variable \textit{currentScore}, which initially is the same as $score(t)$ at the root of the tree. During traversals, if we follow an edge that matches the corresponding attribute value of $t$, \textit{currentScore} remains the same\footnote{An edge selected by the algorithm coming out from an internal node at level $i$ matches the attribute value of $t$ if $t[A_i] = 0$ (resp. $t[A_i] = 1$) and we follow the \textit{left} (resp. \textit{right}) edge.}. However, if the edge selected by the algorithm differs from the actual attribute value, we update the \textit{currentScore} value accordingly. In the PRUNE-DOMINATED-TUPLES($t$) operation, we compare the \textit{minScore} value of each internal node visited by the algorithm with \textit{currentScore}. If the \textit{minScore} value of a node $u$ is higher than \textit{currentScore}, we stop searching in the subtree rooted at $u$, since it's not possible to have a tuple $t'$ under $u$ that is dominated by $t$ (due to monotonicity). Similarly, while checking if $t$ is dominated by any other tuple in the tree, we stop traversing the subtree rooted at an internal node $u$ if \textit{currentScore} is higher than the \textit{maxScore} value of $u$.

Figure~\ref{fig:treeAugmented} presents the value of \textit{minScore} and \textit{maxScore} at each internal node of the tree for the relation in Table~\ref{tab:skylineTreeRunningExample}. Consider a new tuple $t = \langle 1,0,0,0 \rangle$. In order to prune the tuples dominated by $t$, we start from the root node $a$. At node $a$ \textit{currentScore} = $score(t)$ = 8. Since, $t[A_0] = 1$, we need to search both the left and right subtree of $a$. The value of \textit{currentScore} at node $c$ remains unchanged since the edge that was used to reach $c$ from $a$ matches the value of $t[A_0]$. However, for $b$  the value of \textit{currentScore} has to be updated. The \textit{currentScore} value at node $b$ is obtained by changing the value of $t[A_0]$ to 0 (values of the other attributes remain
the same as in the parent node) and compute the score of the updated tuple. Note that the value of \textit{currentScore} is less than \textit{minScore} in both nodes $b$ and $c$. Hence we can be sure that no tuple in subtrees rooted at node $b$ and $c$ can be dominated by $t$.

\begin{figure}[!h]
  \centering
  \includegraphics[scale=1.3]{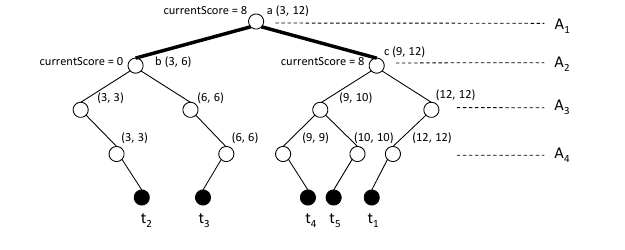}
  \vspace{-6mm}\caption{Example: Early termination}
  \label{fig:treeAugmented}
\end{figure}

%\vspace{1mm}
%\noindent{\bf Ordering of Attributes:} 
%\textcolor{red}{TBD}
%The ordering of attributes can have impact on the performance of the primitive operations. Let $freq(A_i)$ be the number of tuples in $D$ that have attribute $A_i$ value set to 1.

\section{Extending the Data Structure for Categorical Attributes}\label{ap:STCategorical}
We now discuss how to modify ST algorithm for relations having categorical attributes. We need to make the following two changes:

\begin{itemize}
    \item The tree structure designed in \S\ref{subsec:tree} needs to be modified for categorical attribute.
    \item We also need to change the tree traversal algorithms used in each of the three primitive operations.
\end{itemize}

\noindent{\bf Tree structure:} The tree structure will not be binary anymore. In order to incorporate categorical attributes, each node $u$ at level $l$ ($1 \leq l \leq m$) of the tree now should have $|Dom(A_l)|$ children, one for each attribute value $v \in Dom(A_l)$. We shall index the edges from left to right, where the left most edge corresponds to the lowest attribute value and the attribute value corresponding to each edge increases as we move from left most edge to right most edge.

\vspace{1mm}
\noindent{\bf INSERT($t$):} After reaching a node $u$ at level $l$, select the $t[A_l]$-th child of $u$ for moving to the next level of the tree.

\noindent{\bf IS-DOMINATED($t$):} We need to follow all the edges that has index value grater or equal to $t[A_l]$.

\noindent{\bf PRUNE-DOMINATED-TUPLES($t$):} Search in all the subtrees reachable by following edges with index value less than or equal to $t[A_l]$.

%Figure~\ref{fig:treeNodeCategorical} (a) shows the structure of an internal node  corresponding to an attribute $A$ with $Dom(A) = 4$. Given a new tuple $t$ with $t[A] = 1$, edges of $u$ traversed by the primitive operations for $t[A]$ are shown in bold lines.
%
%\begin{figure}[!h]
%  \centering
%  \includegraphics[scale=1]{}
%  \caption{Tree node for categorical attribute}
%  \label{fig:treeNodeCategorical}
%\end{figure}

\section{TOP-DOWN}\label{ap:top-down}

\begin{figure}[!h]
  \centering
  %\vspace{-4mm}
  \includegraphics[scale=1.10]{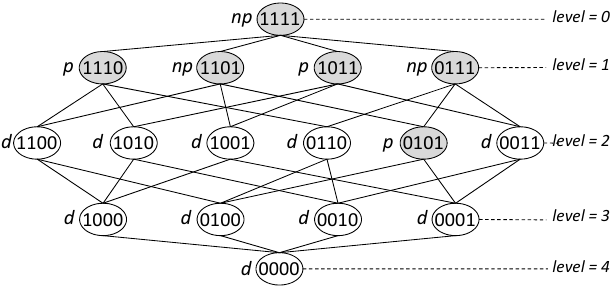}
  %\vspace{-3mm}
  \caption{Nodes traversed by TOP-DOWN Algorithm}
  \label{fig:latticeTopDown}
\end{figure}

Here we provide the details of the TOP-DOWN algorithm proposed in \S~\ref{sec:topdown}.
Given a subspace skyline query $\mathcal{Q}$, consider the corresponding subspace lattice. Each node $u$ in the lattice corresponds to a {\em unique} attribute combination which can be represented by a unique $id$. We assume the existence of the following two functions, (i) $ID(C)$: returns the $id$ of an attribute value combination, and (ii) $InvID(id, m^\prime)$: returns the corresponding attribute-value combination for $id$. The details of these functions can be found in~\cite{preisinger2007hexagon}.

We observe that given a node identifier $id$, one can identify the ids of the parents (resp. children) 
of its corresponding node by calling the two functions $InvID$ and $ID$.
To do so, we first determine the corresponding attribute combination of $id$. Then identify its parents' (resp. children) 
combinations by incrementing (resp. decrementing) the value of each attribute, and finally compute the id of each combination using the function $ID$.
TOP-DOWN starts by traversing the lattice from the top node of the lattice. At this node all attributes have the maximum possible value; then conducts
 a {\em BFS} over it while constructing the level $(i-1)$ nodes from the non-empty nodes at level $i$. A node in the lattice 
is dominated if either one of its parents is dominated or there exists a tuple in the relation that matches the combination of one of its parents.

Let $id$ denote the id of the node in the lattice currently scanned by TOP-DOWN. The algorithm first identifies the parents of the current node and checks if all of them (i) have been constructed (i.e. have not been dominated) and (ii) are marked as \textit{not present} (i.e., there is no tuple in $D_{\mathcal{Q}}$ that had the combination of one of its parents).
If so, the algorithm then checks if there exist tuples in $D_{\mathcal{Q}}$ with the same attribute value combination. We use the term \textit{querying a node} in order to refer to this operation. Algorithm~\ref{alg:topDownGetTuples} presents pseudocode of this operation for a specific attribute value combination. If no such tuple exists in $D_\mathcal{Q}$, it marks $id$ as \textit{not present} and moves to the element. Otherwise, it labels $id$ as \textit{present} and outputs the tuples, returned from GET-TUPLES, as the skyline. The TOP-DOWN algorithm queries a node only when the attribute value combination corresponding to the node is incomparable with the skylines discovered earlier. The algorithm stops when there are no other ids in its processing queue.

\begin{algorithm}[htb]
\caption{{\bf GET-TUPLES}}
\begin{algorithmic}[1]
\label{alg:topDownGetTuples}
\STATE {\bf Input:} Array $values$, Sorted lists $\mathcal{L_Q}$; 
\STATE {\bf Output:} List of tuples that have the same attribute value assignment as $values$.
\STATE $tupleIDSet = \emptyset$
\STATE {\bf for} $i = 1$ to $len(values)$ {\bf do}
    \STATE \hindent[2] $currValue = values[i]$
    \STATE \hindent[2] $currtupleIDSet = $ Get all tupleIDs from $L_i \in \mathcal{L_Q}$ that has value $currValue$
    \STATE \hindent[2] $tupleIDSet = tupleIDSet \cap currtupleIDSet$

\STATE $tupleList = [\,]\,$
\STATE {\bf for} $tupleID$ in $tupleIDSet$ {\bf do}
    \STATE \hindent[2] Construct new tuple $t_{new}$ with attribute values same as $values$ and $t[tupleID] = tupleID$
    \STATE \hindent[2] $tupleList.append(t_{new})$
\STATE {\bf return} $tupleList$;
\end{algorithmic}
\end{algorithm}

The lattice structure for the subspace skyline query $\mathcal{Q}$ in Example~\ref{exmp:subspaceSkyline} is shown in Figure~\ref{fig:latticeTopDown}. Each node $u$ in the lattice represents a specific attribute value assignment in the data space corresponding to $\mathcal{Q}$. For example, the top-most node in the lattice represents a tuple $t$ with all the attribute values 1 (i.e., $t[A_i] = 1, \forall A_i \in \mathcal{Q}$). We start from the top node of the lattice. No tuple in $D_{\mathcal{Q}}$ has value 1 on all the attributes in $\mathcal{Q}$. Therefore, TOP-DOWN marks this node \textit{not present (np)}. We then move to the next level and start scanning nodes from the left. There exists a tuple $t_6 \in D_\mathcal{Q}$ with attribute values $\langle 1, 1, 1, 0 \rangle$. Hence, we mark this node \textit{present (p)} and output $t_6$ as skyline. 
%Nodes in the lattice that are dominated by current node $\langle 1, 1, 1, 0 \rangle$, are marked \textit{dominated (d)}. If we reach a node in that lattice that is already dominated by some other node scanned earlier, TOP-DOWN skip querying that node. 
The algorithm stops after querying node $\langle 0, 1, 0, 1 \rangle$. % stop traversing the lattice further since all the remaining nodes in the lattice are marked \textit{dominated (d)}. Nodes for which we have to perform query over the $\mathcal{L_Q}$ are marked \textit{p}, and \textit{np} (colored in grey).
TOP-DOWN only needs to query 6 nodes (i.e., check 6 attribute value combinations) in order to discover the skylines. Note that the number of nodes queried by TOP-DOWN is proportional to the number of attributes in $\mathcal{Q}$ and inversely proportional to the relation size $n$. This is because with large $n$ and small $|\mathcal{Q}|$, the likelihood of having tuples in the relation that correspond to the upper-level nodes of the lattice is high.

\vspace{1mm}
\noindent{\bf Algorithm GET-TUPLES:} The algorithm to retrieve tuples in the relation matching the attribute value combination of a specific node is described in Algorithm~\ref{alg:topDownGetTuples}. The algorithm accepts two inputs: (1) $values$ array representing the value of each attribute $A_i \in \mathcal{Q}$, and (2) Sorted lists $\mathcal{L_Q}$. For each attribute $A_i \in \mathcal{Q}$ ($1 \leq i \leq m'$), the algorithm retrieves the set of tupleIDs $S_i$, that have value equals $values[i]$. This is done by performing a search operation on sorted list $L_i$. %\textcolor{red}{Gautam: need a bit more explanation. why B-tree? how are various sorted list operations implemented in B-Trees, and what is their complexity? Maybe even move this to prelim section?}\textcolor{red}{Farhad will add~\cite{chaudhuri2004probabilistic} as ref. and move this to prelim/exp. section.} 
The set of tupleIDs that are discovered in every $S_i$ are the ids of the tuple that satisfy the current attribute value combination. We identify 
these ids by performing a set intersection operation among all the $S_i$s ($1 \leq i \leq m'$). Once the ids of all the tuples that match 
values of array $values$ are identified, the algorithm creates tuples for each id with the same attribute value and returns the tuple list.

\begin{algorithm}[!htb]
\caption{{\bf TOP-DOWN}}
\begin{algorithmic}[1]
\label{alg:st}
\STATE {\bf Input:} Query $\mathcal{Q}$, Sorted lists $\mathcal{L_Q}$; \\ {\bf Output:} $\mathcal{S}_\mathcal{Q}$.
%\STATE $L_\mathcal{Q} = \{L_i | A_i \in \mathcal{Q}\}$
\STATE $processed = \emptyset$;
\STATE $C = $ the attribute combination of $\mathcal{Q}$ with maximum possible value for each attribute
\STATE addQ$(queue,ID(C))$
\STATE {\bf while} $queue$ is not empty {\bf do}
    \STATE \hindent $id =$ delQ$(queue)$
    \STATE \hindent {\bf for} $pid$ in parents$(InvID(id))$
        \STATE \hindent[2] {\bf if} $pid\notin processed$ or $pid$ is marked as present
            \STATE \hindent[3] {\bf continue} {\it //skip this node}
    \STATE \hindent $tupleList = $ GET-TUPLES($values, L_\mathcal{Q}$)        
    \STATE \hindent {\bf if} $len(tupleList) == 0$:
        \STATE \hindent[2] append $processed$ by $\langle id,$\textit{not present}$\rangle$
        \STATE \hindent[2] $children = $ children$(InvID(id))$
        \STATE \hindent[2] {\bf for} $c\in children$
            \STATE \hindent[3] {\bf if} $c$ is not in $queue$: addQ$(queue,c)$
    \STATE \hindent {\bf else}:
        \STATE \hindent[2] append $processed$ by $\langle id,$\textit{present}$\rangle$
        \STATE \hindent[2] Output all the tuples in $tupleList$ as skyline.    
\end{algorithmic}
\end{algorithm}

\vspace{1mm}
\subsubsection{Performance Analysis}\label{sec:TOP-DOWN-performance}
For each non-dominated node in the lattice, the TOP-DOWN algorithm invokes the function GET-TUPLES. Hence, we measure the cost of TOP-DOWN as the number of nodes in the lattice for which we invoke GET-TUPLES, times the cost of executing GET-TUPLES function.
Since the size of all sorted lists is equal to $n$, applying binary search on the sorted lists to obtain tuples with a specified value 
on attribute $A_i$ requires $O(\log(n))$; thus the retrieval cost from all the $m'$ lists is $O(m'\log(n))$. Still taking the intersection between the lists is in $O(nm')$, which makes the worst case cost of the GET-TUPLES operation to be $O(nm')$. %and using the set operations to take the intersect among them is 
%In GET-TUPLES, we need to perform $m'$ B-Tree traversal (\textcolor{red} {Gautam: what is ``b-tree traversal''? why does it cost O(n)?}, one for each sorted list $L_i \in \mathcal{L_Q}$ and $m'$ set intersection operation. The execution time of both of these operations is $O(n)$. Therefore, the time complexity of GET-TUPLES function is $O(m'n)$.
Let $k$ be the cost of GET-TUPLES operation over $\mathcal{L_Q}$, for the given relation $D$.
%Given a database with fixed number of tuples $n$, let us assume the cost of querying over $\mathcal{L_Q}$ is constant $k$. \textcolor{red} {Gautam: explain why this assumption of ``fixed'' n and ``constant'' k.}
Moreover, considering $p_i$ as the probability that a tuple has value $1$ on the binary attribute $A_i$, we use $C(l)$ to refer to the expected cost of
 TOP-DOWN algorithm starting from a node $u$ at level $l$ of the lattice. %\textcolor{red} {Gautam: so far in the discussion, you have not explained that the database is generated randomly. So the words  ``expected cost'', ``probability'' etc. are not clear. Also, here you use Cost(l), while in the theorem below you use C(l).} Moreover, consider $p_{!\emptyset}(l)$ be the probability of finding a tuple $t \in \mathcal{D_Q}$ that corresponds to the attribute value combination of $u$.

\begin{theorem}\label{thm:expectedCostTopDown}
Consider a boolean relation $D$ with $n$ tuples and the probability of having value 1 on attribute $A_i$ being $p_i$, and a subspace skyline query $\mathcal{Q}$ with $m'$ attributes. The expected cost of TOP-DOWN on $D$ and $\mathcal{Q}$ starting from a node at level $l$ is described by the following recursive forumula:
\begin{small}
\begin{align}\label{eq:expectedCostTopDown}
\nonumber
C(m') &= k/m'\\
C(l) &= 
\begin{cases}
    k +  (1-p_{!\emptyset}(l)) m' C(l+1) & \quad \text{if } l = 0\\
    \frac{1}{l} \{ k + (1-p_{!\emptyset}(l) (m'-l) C(l+1) \} & \quad \text{otherwise}\\
\end{cases}
\end{align}
\end{small}
\hspace{-1mm}where $p_{!\emptyset}(l) = 1 - (1 - \prod_{i=1}^{l}(1-p_i)\prod_{i=1}^{m'-l}p_i ) ^ n$.
\end{theorem}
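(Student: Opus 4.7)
The plan is to analyze the algorithm's cost by tracking, for each level $l$ of the subspace lattice, the expected cost contribution per ``call'' arriving at a node at that level, where a call is one of the potentially multiple recursive invocations that occur when a parent is discovered to be ``not present''. First I would fix a specific node $u$ at level $l$ (encoded as having $l$ attributes reduced from their maximum) and, exploiting the i.i.d. assumption on attributes, compute the probability $\rho_l = \prod_{i=1}^{l}(1-p_i)\prod_{i=1}^{m'-l}p_i$ that a fixed tuple of $D$ exactly matches $u$. Independence across the $n$ tuples then gives $p_{!\emptyset}(l) = 1 - (1-\rho_l)^n$ as the probability that $u$ is ``present'' (at least one tuple matches), which agrees with the formula in the statement.

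Next I would derive the recursion. GET-TUPLES is invoked once at $u$, contributing cost $k$. If $u$ is present (probability $p_{!\emptyset}(l)$), the algorithm marks it so and stops expanding below $u$. If $u$ is not present (probability $1-p_{!\emptyset}(l)$), TOP-DOWN descends to all $m'-l$ children of $u$ (all $m'$ children when $l=0$); each such descent counts as one call arriving at a level-$(l+1)$ node. Writing $C(l)$ for the expected cost per call at level $l$, for $l=0$ this gives the unconditioned recurrence $C(0) = k + (1-p_{!\emptyset}(0))\,m'\,C(1)$, since the root receives exactly one call. For $l\ge 1$, a specific level-$l$ node has $l$ parents and can therefore receive up to $l$ calls during the BFS; GET-TUPLES still runs only once per distinct node, so dividing the per-node contribution $k + (1-p_{!\emptyset}(l))(m'-l)C(l+1)$ by $l$ yields the per-call cost $C(l) = \tfrac{1}{l}\{k + (1-p_{!\emptyset}(l))(m'-l)C(l+1)\}$. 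The base case is the unique bottom node at level $m'$: it has no children to recurse into, is reached from $m'$ distinct parents, and contributes only the query cost $k$, so $C(m') = k/m'$.

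The hard part will be justifying the ``divide by $l$'' step cleanly, because the events ``parent $v_1$ of $u$ is not present'' and ``parent $v_2$ of $u$ is not present'' are not independent (both depend on the same $n$ tuples of $D$). I would handle this by working linearly in expectation: the total contribution of any fixed level-$l$ node $u$ to the aggregate cost equals its per-node expected cost $k + (1-p_{!\emptyset}(l))(m'-l)C(l+1)$ times the indicator that it is ever called. By the i.i.d. symmetry across the $l$ parents of $u$, each parent contributes the same share of this expectation, so the per-call cost is obtained exactly by dividing by $l$; summing the per-call contributions over the $l$ parents then recovers the full per-node expectation without double counting. Unrolling the recurrence from $l=0$ downward using the base case $C(m')=k/m'$ gives the closed-form expected cost of TOP-DOWN as stated.
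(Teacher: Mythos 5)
Your proposal follows essentially the same route as the paper's own proof: the same per-tuple match probability $\prod_{i=1}^{l}(1-p_i)\prod_{i=1}^{m'-l}p_i$ yielding $p_{!\emptyset}(l)$, the same recursion charging $k$ plus $(1-p_{!\emptyset}(l))(m'-l)C(l+1)$ at each node, and the same division by the number $l$ of parents to avoid counting a node's query cost multiple times. If anything, you are slightly more careful than the paper in flagging and arguing (via symmetry and linearity of expectation) why the divide-by-$l$ step is legitimate, but the decomposition and conclusion are identical.
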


\begin{proof}
Consider a node $u$ at level $l$ of the lattice. Node $u$ represents a specific attribute value assignment with $l$ number of 0s and $(m' - l)$ number of 1s. Querying at node $u$ will return all tuples in dataset that have the same attribute value assignment as $u$. Let $p(t, l)$ be the probability of a tuple $t \in \mathcal{D_Q}$ having $l$ number of 0s and $(m' - l)$ number of 1s.

\begin{align}
p(t, l) = \prod\nolimits_{i=1}^{l}(1-p_i)\prod\nolimits_{i=1}^{m'-l}p_i
\end{align}

If querying at node $u$ returns at-least one tuple then we do not need to traverse the nodes dominated by $u$ anymore. However, if there exists no tuple in $D_{\mathcal{Q}}$ that corresponds to the attribute value combination of $u$, we at-least have to query the nodes that are immediately dominated by $u$. Let $p_{!\emptyset}(l)$ be the probability that there exists a tuple $t \in D_{\mathcal{Q}}$ that has the same attribute value assignment as $u$. Then,

\begin{align}
p_{!\emptyset}(l) = 1 - (1 - \prod_{i=1}^{l}(1-p_i)\prod_{i=1}^{m'-l}p_i ) ^ n
\end{align}

There are total $(m' - l)$ number of nodes immediately dominated by $u$. Therefore, Cost at node $u$ is the cost of query operation (i.e., $k$) plus with $(1 - p_{!\emptyset}(l))$ probability the cost of querying its $(m'-l)$ immediately dominated nodes.

\begin{align}
C(l) = k +  (1-p_{!\emptyset}(l)) (m' - l) C(l+1)
\end{align}

Note that a node $u$ at level $l$ has total $l$ number of immediate dominators causing the cost at node $u$ to be computed $l$ times. However, TOP-DOWN only needs to perform only one query at node $u$. Hence, the actual cost can be obtained by dividing the computed cost with value $l$.
\end{proof}

\begin{figure}[!h]
  \centering
  \includegraphics[scale=.60]{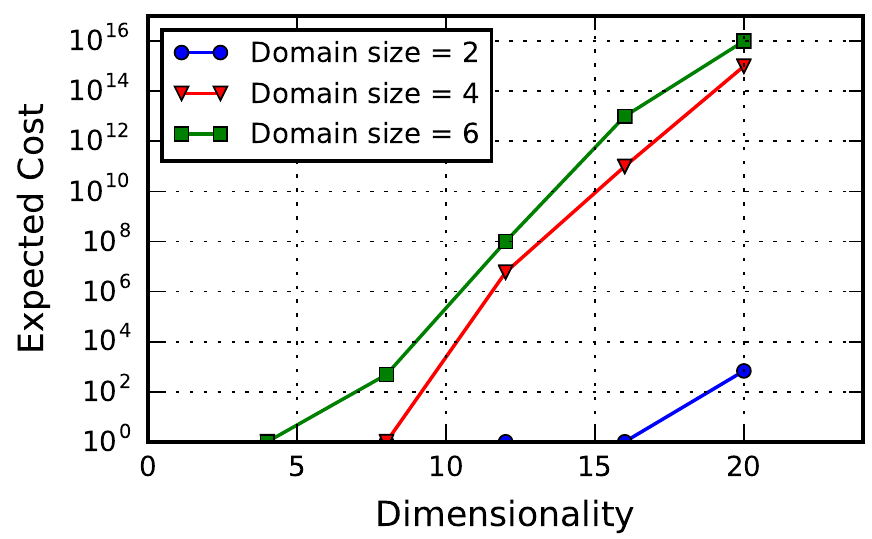}
  \vspace{-4mm}\caption{Expected number of nodes queried vs. query length}
  \label{fig:performanceTopDown}
\end{figure}

\noindent{\bf Limitation:}
We use Equation~\ref{eq:expectedCostTopDown} to compute $|C(l)|$ as a function of $|\mathcal{Q}|$ over three uniform relations containing one million tuples with cardinality 2, 4, and 6 respectively. The expected cost increases exponentially as we increase the query length. Moreover, the expected cost also increases when the attributes in $\mathcal{Q}$ have higher cardinality.
%\textcolor{red} {Gautam: explain why you need to use a simulation.}

\section{Proofs}\label{sec:appendixProof}
In this section, we provide detailed proofs for the theorems from the main section of the paper.

\textsc{Theorem~\ref{thm:expectedCostSTISDominated}.} {\em
Considering a relation with $n$ binary attributes where $p_i$ is the probability that a tuple has value 1 on attribute $A_i$, the expected cost of IS-DOMINATED($t_\mathcal{Q}$) operation on a tree $T$, containing $s$ tuples is as specified in Equation~\ref{eq:expectedCostSTISDominated}.
}

\begin{proof}
Consider $t$ be the tuple for which we have to check if it is dominated. IS-DOMINATED stops the recursion when we reach a leaf node or move to a node that is empty (i.e., has no tuple mapped under it). Therefore, $C(m', s) = 1$ and $C(l, 0) = 1$.

Let us assume that we are in node $u$ at level $l$ of the tree and there are $s$ tuples mapped in the subtree rooted at $u$. 

If $t[A_l] = 0$, IS-DOMINATED first searches in the right subtree. If no tuple $t'_\mathcal{Q}$ in the right subtree dominates $t_\mathcal{Q}$, we then move to the left subtree. Let us assume the right subtree of $u$ contains $s_{right}$ number of tuples ($s_{right} \leq s$). Let $S(l, s_{right})$ be the probability that there exists a tuple in the right subtree of $u$ containing $s_{right}$ tuples that dominates $t_\mathcal{Q}$. In order for a tuple $t'_\mathcal{Q}$ to dominate $t_\mathcal{Q}$, it must have at-least value 1 on the attributes in $\mathcal{A}_{ones(t[l+1:m'])}$. This is because, since $t'[A_i] \geq t[A_i]$ ($1 \leq i \leq l-1$) and $t'[A_l] > t[A_l]$, having value 1 on attributes in $\mathcal{A}_{ones(t[l+1:m'])}$ is enough for $t'_\mathcal{Q}$ to dominate $t_\mathcal{Q}$. Hence, the probability of $t'_\mathcal{Q}$ dominating $t_\mathcal{Q}$ is $\prod\nolimits_{i=1}^{|\mathcal{A}_{ones(t[l+1:m'])}|}p_i$. Therefore,

\begin{align}
    S(l, s_{right}) = 1 - (1 - \prod\nolimits_{i=1}^{|\mathcal{A}_{ones(t[l+1:m])}|}p_i)^{s_{right}}
\end{align}

The expected cost of IS-DOMINATED, when $t[A_l] = 0$ is then,
\begin{align}
    (1- S(l, s_{right})) C(l+1, s-s_{right}) + C(l+1, s_{right})
\end{align}

If $t[A_l] = 1$, IS-DOMINATED will always search in the right subtree. Hence, the expected cost when $t[A_l] = 0$ is,
\begin{align}
    C(l+1, s_{right})
\end{align}

A node at level-$l$ containing $s$ tuples under it with the probability of having 1 on attribute $A_l$ being $p_l$, the left subtree will have $i$ tuples with the binomial probability ${s \choose i} (1-p_l)^i p_l^{s-i}$. Hence, expected cost node $u$, $C(l, s)$ is,

\begin{align}
\nonumber
1 + \sum_{i=0}^s {s \choose i} & (1-p_l)^i p_l^{s-i} ( C(l+1, s-i) + \\
                               & (1-p_l)(1-S(l, s-i))C(l+1, i))
\end{align}
\end{proof}

\textsc{Theorem~\ref{thm:expectedCostSTPruneDominatedTuples}.} {\em
Given a boolean relation $D$ with $n$ tuple and the probability of having value 1 on attribute $A_i$ being $p_i$, the expected cost of PRUNE-DOMINATED-TUPLES($t_\mathcal{Q}$) operation on a tree $T$, containing $s$ tuples is as computed in Equation~\ref{eq:expectedCostSTPruneDominatedTuples}.
}

\begin{proof}
PRUNE-DOMINATED-TUPLES($t_\mathcal{Q}$) stops the recursion when we reach a leaf node or move to a node that is empty (i.e., has no tuple mapped under it). Therefore, $C(m', s) = 1$ and $C(l, 0) = 1$.

Suppose we are in node $u$ at level $l$ of the tree and there are $s$ tuples mapped in the subtree rooted at $u$. 

If $t[A_l] = 0$, we need to search only in the left subtree. Whereas, for $t[A_l] = 1$ we need to search both the left and right subtree.

Let $p_l$ be the probability of having value $1$ on attribute $A_l$. The left subtree of node $u$ at level $l$ (with $s$ tuples under it) will have $i$ tuples with the binomial probability ${s \choose i} (1-p_l)^i p_l^{s-i}$. Hence, expected cost at node $u$, $C(l, s)$, is:

\begin{align}
\nonumber
1 + \sum_{i=0}^s {s \choose i}(1-p_l)^i & p_l^{s-i} ( (1 - p_l)C(l+1, i) +\\
                                & p_l (C(l+1, i) + C(l+1, s-i) ) )
\end{align}
\end{proof}

\textsc{Lemma~\ref{lemma:expectedDiscovery}.} {\em
Considering $p_i$ as the probability that a tuple has value 1 on the binary attribute $A_i$, the expected number of tuples discovered by TA-SKY after iterating $i$ lines is as computed in Equation~\ref{eq:expectedDiscovery}.
}
\begin{proof}
The probability that a tuple $t$ is discovered by iterating $i$ rows is one minus the probability that $t$ is not discovered in any of the $m^\prime$ lists in $\mathcal{L_Q}$. Formally:
\begin{align}
P_{seen}(t,i) = 1 - \Pi_{j=1}^{m^\prime} P_{!seen}(t,i,L_j)
\end{align}
where $P_{!seen}(t,i,L_j)$ is the probability that $t$ is not discovered at list $L_j$ until row $i$.
$P_{!seen}(t,i,L_j)$ depends on the number of $(tupleId, value)$ pairs with value $1$ in list $L_j$. A list $L_j$ has $k$ number of $(tupleId, value)$ pairs with value $1$ if the database has $k$ tuples with value $1$ on attribute $A_j$, while others have value $0$ on it. Thus, the probability that $L_j$ has $k$ number of $(tupleId, value)$ pairs with value $1$:
\begin{align}
P_{L_j}(k) = {n\choose k}(1-p_j)^{n-k}p_j^k
\end{align}
$t$ is not seen until row $i$ at list $L_j$ if either of the following cases happen:
\begin{itemize}
\item $t[A_j] = 0$ and (considering the random positioning of tuples in lists) $t$ is located after position $i$ in list $L_j$ for all the cases that $L_j$ has $k \, (k < i)$ number of $(tupleId, value)$ pairs with value $1$. %\textcolor{red}{if $k \geq i$, probability of not seeing $t$ after iteration $i$ is $1$.[abol: this is already considered by saying that (i) the list only has the tupleids of the tuples that have value 1 on that attribute and (ii) plugging the list length in the analysis! So I believe the current analysis for this component is wrong and should get rolled back to what it was. We can discuss if it is not clear.]} 
\item $t[A_j] = 1$ and (considering the random positioning of tuples in lists) $t$ is located after position $i$ in list $L_j$ for all the cases that $L_j$ has $k \, (k > i)$ number of $(tupleId, value)$ pairs with value $1$.
\end{itemize}
Thus:
\begin{align}
\nonumber
&P_{!seen}(t,i,L_j) = \\ 
&(1 - p_j) \Big(\sum_{k=0}^{i-1}P_{L_j}(k)\frac{n-i}{n-k} + \sum_{k=i}^{n}P_{L_j} \Big) + p_j\sum_{k=i+1}^n P_{L_j}(k) \frac{k-i}{k}
\end{align}
We now can compute $P_{seen}(t,i)$ as following:
\begin{align}
\nonumber
&P_{seen}(t,i) = \\
&1 - \prod_{j=1}^{m^\prime} \bigg( (1 - p_j) \Big(\sum_{k=0}^{i-1}P_{L_j}(k)\frac{n-i}{n-k} + \sum_{k=i}^{n}P_{L_j} \Big) +\\ \nonumber &p_j\sum_{k=i+1}^n P_{L_j}(k) \frac{k-i}{k} \bigg)
\end{align}
Having the probability of a tuple being discovered by iterating $i$ lines, the expected number of tuples discovered by iterating $i$ lines is:
\begin{align}
\nonumber
&E_{seen}[i] = n P_{seen}(t,i) = \mbox{ Equation~\ref{eq:expectedDiscovery}}
\end{align}
\end{proof}

\textsc{Theorem~\ref{thm:expectedCostTA-SKY}.} {\em
Given a subspace skyline query $\mathcal{Q}$, the expected number of sorted access performed by TA-SKY on a $n$ tuple boolean database with probability of having value $1$ on attribute $A_j$ being $p_j$ is,
\begin{align*}
m^\prime \sum_{i=1}^n i\times P_{stop}(i)
\end{align*}
where $P_{stop}(i)$ is computed using Equations~\ref{eq:stopi-1},~\ref{eq:stopi-2}, and~\ref{eq:stopi-3}.
}

\begin{proof}
Let us first compute the probability that algorithm stops after visiting $i$ rows of the lists. Please note that the algorithm checks the stopping condition at iteration $i$ if $cv_{ij} = 0$ for at least one sorted list. Thus the algorithm stops when (1) $cv_{ij} = 0$ for at least one sorted list AND (2) there exists a tuple among the discovered ones that dominates the maximum possible tuple in the remaining lists.

Suppose $i^\prime$ tuples have seen at least in one of the list so far. Using Lemma~\ref{lemma:expectedDiscovery} we can set $i^\prime = E_{seen}[i]$. Let $P_{j0}(i)$ be the probability that $cv_{ij} = 0$ for sorted list $L_j$.
\begin{align}
P_{j0} = (1 - p_j)^{n-i}
\end{align}

Moreover, Consider $P_0(i, k)$ be the probability that after iteration $i$, $cv_i = 0$ for $k$ sorted lists and $\mathcal{Q}_k$ is corresponding attribute set. Therefore,
\begin{align}
P_0(i, k) = {m^\prime \choose k} \prod_{A_j \in \mathcal{Q}_k} P_{j0} \prod_{A_j \in \mathcal{Q} \setminus \mathcal{Q}_k} (1 - P_{j0})
\end{align}

For a given setting that $cv_i = 0$ for $k$ sorted lists, the algorithm stops, {\it iff} there exists at least one tuple among the discovered ones that dominate the maximum possible value in $m'$ sorted lists; i.e. the value combination that has $0$ in $k$ and $1$ in all the remaining $m^\prime - k$ positions.

A tuple $t$ need to have the value $1$ in all the $m^\prime - k$ list and also \emph{at least one value $1$ in one of the $k$ lists ($\mathcal{Q}_k$)} to dominate the maximum possible remaining value. The probability that a given tuple satisfies this condition is:
\begin{align}
P_{stop}(t, \mathcal{Q}_k) = \underset{\forall A_j \in \mathcal{Q}\backslash \mathcal{Q}_k} {\Pi} p_j (1-\underset{\forall A_j \in \mathcal{Q}_k} {\Pi} (1 - p_j))
\end{align}
Thus, the probability of having at least one tuple that satisfies the dominating condition is:
\begin{align}
P_{dominate}(i, k) = {m^\prime \choose k} \times (1 - (1 - P_{stop}(t, \mathcal{Q}_k))^{i^\prime})
\end{align}

We now can compute the probability distribution of the algorithm cost as following:
\begin{align}
P_{stop}(i) = \sum_{k=1}^m P_0(i, k) \times P_{dominate}(k)
\end{align}

Finally, the expected number of sorted access performed by TA-SKY is:
\begin{align}
m^\prime \sum_{i=1}^n i\times P_{stop}(i)
\end{align}
\end{proof}       % for report

\end{document}